\newcommand{\judgment}[2]{\ensuremath{\Gamma\vdash^\cap_{\mathbb{Q}}
    #1 : #2}}
\newenvironment{types}[2][c]
{\begin{minipage}[#1]{#2}\centering
\hrule height2pt
\centerline{\vrule width2pt height 5pt\hfill \vrule width2pt height 5pt}
\begin{minipage}{\dimexpr\textwidth-4pt-1em}}
{\end{minipage}
\centerline{\vrule width2pt height 5pt\hfill \vrule width2pt height 5pt}
\hrule height2pt
\end{minipage}}
\newenvironment{proof}[1][Proof]{\begin{trivlist}
\item[\hskip \labelsep {\itshape #1.}]}{\end{trivlist}}
\newcommand{\qed}{\hfill$\square$}
\newtheorem{lemma}{Lemma}
\newtheorem{corollary}{Corollary}
\newtheorem{theorem}{Theorem}
\theoremstyle{definition}
\newtheorem{mydef}{Definition}
\title{Liquid Intersection Types}
\author{Mário Pereira
\quad
Sandra Alves
\quad
Mário Florido
\institute{University of Porto, Department of Computer Science \&
 LIACC}
\email{\{mariopereira,sandra,amf\}@dcc.fc.up.pt}
}
\begin{document} 
\maketitle

\begin{abstract}
  We present a new type system combining refinement types and
  the expressiveness of intersection type discipline. The use of such
  features makes it possible to derive more precise types than in the
  original refinement system. We have been able to prove several
  interesting 
  properties for our system (including subject reduction) and developed
  an inference algorithm, which we proved to be
  sound. 
\end{abstract}

\section{Introduction}
\label{sec:intro}

Refinement types
\cite{Knowles:2010:HTC:1667048.1667051} state
complex program 
invariants, by augmenting type systems with logical
predicates. A refinement type of the form $\left\{\nu : B \: | \:
  \phi\right\}$ stands for the set of values from basic type $B$
restricted to the filtering predicate (refinement) $\phi$.
A subtyping relation exists for refinement types, which will generate
implication conditions: 
\[
\inferrule
{
  \Gamma;\nu : B \vdash \phi\Rightarrow\psi
}
{
  \Gamma\vdash \{\nu: B \: | \: \phi\} <: \{\nu : B \: | \: \psi\}
}
\]

One idea behind the use of such type systems is to
perform 
type-checking using SMTs (Satisfability Modulo
Theories) solvers~\cite{Shostak:1984:DCT:2422.322411}, discharging 
conditions as the above $\phi \Rightarrow \psi$. 
However,
the use of arbitrary boolean terms as refinement expressions leads to
undecidable type systems, both for type checking and inference. 

Liquid Types \cite{Rondon:2008:LT:1375581.1375602,
  Vazou:2013:ART:2450268.2450286} present a system capable of
automatically 
inferring refinement types, by means of two main restrictions to a general 
refinement type system: refinement predicates of some terms are
conjunctions of  
expressions exclusively taken from a global, user-supplied set
(denoted $\mathbb{Q}$) of logical qualifiers (simple predicates over
program variables, the value variable $\nu$ and the variable
placeholder $\star$); and a conservative (hence decidable) notion of
subtyping. 

% The Liquid Types system is defined as an extension to the Damas-Milner
% type 
% system, with the term language extended with an \texttt{if-then-else}
% constructor and constants. A key idea behind this system is that
% the refinement type of every term is a refinement of the
% corresponding ML type. 
%As an example, consider $\mathbb{Q} = \{x \leq\nu, y \leq \nu\}$ and
%the term $\mathrm{max}\equiv \lambda x y. \mathrm{if} x > y
%\mathrm{then} x \mathrm{else} y$ 

Despite the interest of Liquid Types, some situations arise where the
inference procedure infers poorly accurate types. For example,
considering 
$\mathbb{Q} = \{\nu \geq 0, \nu \leq 0 \}$ and the term 
$\mathit{neg}\equiv \lambda x. - x$, Liquid Types infer for 
$\mathit{neg}$ the type $x:\{0 \leq \nu \wedge 0 \geq
\nu\}\rightarrow \{0 \leq \nu \wedge 0 \geq \nu\}$
(throughout this paper we write $\left\{\phi\right\}$ instead of
$\left\{\nu:B \: | \: \phi\right\}$ whenever $B$ is clear from the
context). This type cannot 
%at all
 be taken as a precise description of the \textit{neg}
function's behavior, since it is not expressed that for a positive
(resp. negative)
argument the function returns a negative (resp. positive) value. With
our system we will have for \textit{neg} 
the type $(x:\left\{\nu\geq 0\right\}\rightarrow\left\{\nu\leq
  0\right\})\cap(x:\left\{\nu\leq 0\right\}\rightarrow\left\{\nu\geq
  0\right\})$.

We introduce \emph{Liquid Intersection Types}, 
% propose 
a refinement type system %based on Liquid Types,
with the addition of intersection types
\cite{barendregt1983filter,coppo1980}. % Our intersections are at the refinement 
% expressions level only, i.e. for the type $\sigma\cap\sigma'$ both
% $\sigma$ and $\sigma'$ are of the same form, solely differing in the
% refinement predicates. 
Our use of intersections in conjunction with refinement types is
motivated by a problem clearly identified for Liquid Types: the
absence of most-general types, as in the ML tradition. Our use of
intersections for refinement types draws inspiration from
\cite{Freeman:1991:RTM:113445.113468}, since this offers a way to use
jointly detailed types and intersections. Though, integrating this
expressiveness with refinement types and keeping the qualifiers from
$\mathbb{Q}$ simple (which must be provided by the
programmer) implies the design of a new type system. 

Besides the new type system, another contribution of this work is a new
inference 
algorithm for Liquid Intersection Types. % This algorithm intends to
% allow programmers to automatically benefit from the precision and
% expressiveness of 
% Liquid Intersection Types, needing only to provide the
% set $\mathbb{Q}$. 

This paper is organized as follows. Section \ref{sec:type-system}
presents the designed type system, with a focus on the language
syntax, semantics and typing rules, as well as a soundness result. The
type inference algorithm is introduced in section
\ref{sec:inference}. Finally, in section \ref{sec:conclusion} we
conclude with final remarks and explain some possible future work.

% We introduce a new rule to form intersections
% on types (our typing relation is denoted by $\vdash^\cap$):
% \[
% \inferrule*[lab=Intersect]
% {\Gamma\vdash^\cap M : \sigma \\ \Gamma\vdash^\cap M : \tau}
% {\Gamma\vdash^\cap M : \sigma \cap \tau}
% \]

%%
%% Local Variables:
%% TeX-master : "main.tex"
%% End:
%%
%\input{related_work}
%\input{examples}

\section{Type system}
\label{sec:type-system}

% We present our type system based on Liquit types, with the novelty of 
% intersection on refinement types. We begin by introducing the syntax
% of terms, types and environments (section
% \ref{sec:syntax-semantics}). In section \ref{sec:ty-rules} we expose
% the typing rules, including rules for type-checking, subtyping and
% well-formedness of types. Finally, a subject reduction is
% established in section \ref{sec:subject-reduction}.

\subsection{Syntax and semantics}
\label{sec:syntax-semantics}
\vspace{-10pt}
\begin{figure}
\small
\[
  \begin{array}{lrlr}
    M, N \qquad & ::= & & \quad \textit{Terms:} \\
    & | & x & \textit{variable} \\
    & | & \mathrm{c} & \textit{constant} \\
    & | & \lambda x.M & \textit{abstraction} \\
    & | & MN & \textit{application} \\
    & | & \mathrm{let} \; x = M \; \mathrm{in} \; N \quad &
    \textit{let-binding} \\
    & | & [\Lambda\alpha]M & \textit{type abstraction} \\
    & | & [\tau]M & \textit{type instantiation} \\
    \phi \qquad & ::= & & \textit{Liquid refinements:} \\
    & | & q & \textit{qualifier from $\mathbb{Q}$} \\
    & | & \top & \textit{true  (empty  refinement)} \\
    B \qquad & ::= & & \textit{Base types}: \\
    & | & \mathtt{int} &  \textit{integers}\\
    & | & \mathtt{bool} & \textit{booleans} \\
    \overset{\sim}{\tau}(\mathcal{R}) \qquad & ::= & & \textit{Pretype
      skeleton} \\ 
    & | & \left\{\nu : B \: | \: \mathcal{R} \right\} & \textit{base
      refined  
      type} \\ 
    & | &
    x:\overset{\sim}{\tau}(\mathcal{R})\rightarrow
    \overset{\sim}{\tau}(\mathcal{R})  
    & 
    \textit{function} \\ 
    & | &
    \overset{\sim}{\tau}(\mathcal{R})\cap\overset{\sim}{\tau}(\mathcal{R})
    & 
    \textit{intersection} \\ 
    & | & \alpha & \textit{type variable} \\
    \overset{\sim}{\sigma}(\mathcal{R}) \qquad & ::= & & \textit{Pretype
      scheme skeleton}:\\ 
    & | & \overset{\sim}{\tau}(\mathcal{R}) & \textit{mono pretype} \\
    & | & \forall\alpha.\overset{\sim}{\sigma}(\mathcal{R}) &
    \textit{pretype scheme}   
    \\
    T \qquad & ::= & & \textit{Simple types:} \\
    & | & B & \textit{basic type} \\
    & | & \alpha & \textit{type variable} \\
    & | & T_1\rightarrow T_2 & \textit{functional type} \\
    \overset{\mathbf{.}}{\tau}(\mathcal{R}),
    \overset{\mathbf{.}}{\sigma}(\mathcal{R}) 
    \qquad & ::= & 
    \overset{\sim}{\tau}(\mathcal{R}) :: T,  
    \overset{\sim}{\sigma}(\mathcal{R}) :: T & \textit{Well-founded
      pretype, scheme} 
    \\
    \tau, \sigma
    \qquad & ::= & 
    \overset{\mathbf{.}}{\tau}(E),  
    \overset{\mathbf{.}}{\sigma}(E) & \textit{Refinement
      Intersection Type, Scheme} 
    \\
    \hat{\tau}, \hat{\sigma} \qquad & ::= &
    \overset{\mathbf{.}}{\tau}(\phi),  
    \overset{\mathbf{.}}{\sigma}(\phi) & \textit{Liquid
      Intersection Type, Scheme} 
    \\
    \Gamma \qquad & ::= & & \textit{Environment:} \\
    & | & \emptyset & \textit{empty}\\
    & | & \Gamma;x:\sigma & \textit{new  binding}
    % \mathrm{\mathbf{Contexts}} & & & \boxed{C} \\
    % C & ::= & & Contexts: \\
    % & | & [\;] & hole \\
    % & | & C \: x & left \: application \\
    % & | & \mathrm{let} \; x = C \; \mathrm{in} \; M & let-context \\
    % \\
    % \mathrm{\mathbf{Evaluation}} & & & \boxed{M\leadsto N} \\
    % \mathrm{c} \: v & \leadsto &
    % \left\llbracket\mathrm{c}\right\rrbracket(v) & [E-Constant] \\
    % (\lambda x.M) y & \leadsto & [y/x]M & [E-\beta] \\
    % \mathrm{let} \; x = v \; \mathrm{in} \; M & \leadsto & [v/x]M &
    % [E-Let] \\
    % C[M] & \leadsto & C[N] \; if \; M\leadsto N & [E-Compat] 
  \end{array}
  \]
  \caption{Syntax}
  \label{fig:syn}
\end{figure}

Our target language is the $\lambda$-calculus extended with constants
and, as in the Damas-Milner type system, local bindings via the
\texttt{let} constructor. We assume the Barendregt convention
regarding names of free and bound variables \cite{bar84}, and identify
terms modulo $\alpha$-equivalence. The syntax of expressions and types
is presented in Figure \ref{fig:syn}. We will use $FV(M)$ and $BV(M)$
to denote the set of free and bound variables of term $M$,
respectively. These notions can be lifted to type environments, as
$FV(\Gamma)$, resp. $BV(\Gamma)$, denoting the free variables,
resp. the bound variables, of refinement expressions for every typed
bound within $\Gamma$.

The set of constants of our language is a countable alphabet of
constants $c$, including literals and primitive functions. We assume
for primitive functions the existence of at least arithmetic
operators, a fixpoint combinator \texttt{fix} and an identifier
representing \texttt{if-then-else} expressions. The type of constants
is established using a mapping $ty(c)$, assigning a refined type that
captures the semantic of each constant. For instance, to an integer
literal $n$ it would be assigned the type $\left\{\nu : \mathtt{int}
  \: | \: \nu = n\right\}$. Note that refinements may come from the
user defined set $\mathbb{Q}$ or from the constants and
sub-derivations. In the latter case the refinement expressions are
arbitrary expressions from $E$.

% We use $\overset{\sim}{\sigma}$ to denote pretypes (this notion of
% pretypes goes back to \cite{Ong:2012:TGS:2359888.2359957}), which
% stand for 
% type variables, basic and functional refined types and intersection of
% pretypes. The notation
% $x:\tau_1\rightarrow\tau_2$ will be
% preferred over the usual $\Pi(x:\tau_1).\tau_2$ for functional
% dependent types, meaning that variable $x$ may occur in the
% refinement expressions present in $\tau_2$. % Our notion of basic refined
% % types has the expected interpretation: a type $\left\{\nu : B \:
% %   | \: \phi\right\}$ represents the set of values from basic type $B$
% % for which the condition $\phi$ holds.
% An intersection in
% pretypes (denoted by $\cap$) indicates that a term with type
% $\overset{\sim}{\sigma}_1\cap\overset{\sim}{\sigma}_2$ has both type
% $\overset{\sim}{\sigma}_1$ and $\overset{\sim}{\sigma}_2$, respecting
% the possible refinement predicates figuring in these types.

We use $\overset{\sim}{\tau}(\mathcal{R})$ and
$\overset{\sim}{\sigma}(\mathcal{R})$ to denote pretypes and pretype
schemes, respectively (this notion 
of pretypes goes back to \cite{Ong:2012:TGS:2359888.2359957}), which 
stand for type variables, basic and functional refined types,
intersection of pretypes and polymorphic pretypes. The notation
$x:\tau_1\rightarrow\tau_2$ will be
preferred over the usual $\Pi(x:\tau_1).\tau_2$ for functional
dependent types, meaning that variable $x$ may occur in the
refinement expressions present in $\tau_2$. % Our notion of basic refined
% types has the expected interpretation: a type $\left\{\nu : B \:
%   | \: \phi\right\}$ represents the set of values from basic type $B$
% for which the condition $\phi$ holds.
An intersection in
pretypes (denoted by $\cap$) indicates that a term with type
$\overset{\sim}{\tau}_1(\mathcal{R})\cap
\overset{\sim}{\tau}_2(\mathcal{R})$  
has both type $\overset{\sim}{\tau}_1(\mathcal{R})$ and
$\overset{\sim}{\tau}_2(\mathcal{R})$, respecting the possible refinement
predicates figuring in these types. We assume the $'\cap'$ operator to be
commutative, associative and idempotent. 

A \emph{well-founded pretype} (resp. \emph{well-founded type scheme})
is a pretype 
$\overset{\sim}{\tau}(\mathcal{R})$
(resp. $\overset{\sim}{\sigma}(\mathcal{R})$) for such that
$\overset{\sim}{\tau}(\mathcal{R}) ::  
T$ (resp. $\overset{\sim}{\sigma}(\mathcal{R}) ::  
T$), for some $T$ ($T$ stands for simple types for the rest of
this document). The
\emph{well-founded} relation $\overset{\sim}{\sigma}(\mathcal{R}) :: T$ is
inductively defined by: 
\[
\small
\begin{array}{ccccc}
  \inferrule*[Lab=\footnotesize::-Var]
  {
  }
  {
    \alpha :: \alpha
  }  & 
  \inferrule*[Lab=\footnotesize::-Fun]
  {
    \overset{\sim}{\tau}_x(\mathcal{R}) :: T_x \\
    \overset{\sim}{\tau}(\mathcal{R}) :: T
  }
  {
    (x:\overset{\sim}{\tau}_x(\mathcal{R})\rightarrow
    \overset{\sim}{\tau}(\mathcal{R})) :: 
    T_x\rightarrow T
  }  & 
  \inferrule*[Lab=\footnotesize::-Ref]
  {
  }
  {
    \left\{\nu : B \: | \: \mathcal{R}\right\} :: B 
  }  & 
  \inferrule*[Lab=\footnotesize::-$\forall$]
  {
    \overset{\sim}{\sigma}(\mathcal{R}) :: T
  }
  {
    \forall\alpha.\overset{\sim}{\sigma}(\mathcal{R}) :: T
  }  & 
  \inferrule*[Lab=\footnotesize::-$\cap$]
  {
    \overset{\sim}{\tau}_1(\mathcal{R}) :: T \\
    \overset{\sim}{\tau}_2(\mathcal{R}) :: T
  }
  {
    \overset{\sim}{\tau}_1(\mathcal{R})\cap
    \overset{\sim}{\tau}_2(\mathcal{R}) :: T
  }
\end{array}
\]
Using this relation guarantees that
intersection of types are at the refinement expressions only,
i.e. for $\overset{\sim}{\sigma}_1(\mathcal{R})
\cap\overset{\sim}{\sigma}_2(\mathcal{R})$ both 
$\overset{\sim}{\sigma}_1(\mathcal{R})$ and
$\overset{\sim}{\sigma}_2(\mathcal{R})$ are of  
the same form, solely differing in the refinement predicates.

% We present our type language parameterized by the possible set of
% refinement expressions, denoted by $\mathcal{R}$. A \emph{Liquid
%   Intersection Type} (resp. \emph{Scheme}) is a well-founded pretype
% (resp. scheme) where $\mathcal{R}$ is instantiated with an expression
% from $\mathbb{Q}$, denoted using $\hat{\tau}$
% (resp. $\hat{\sigma}$). On the other hand, if $\mathcal{R}$
% is replaced by an arbitrary boolean formula (we use $E$ to refer to
% the set of such expressions), we use the notation
% $\overset{\cdot}{\tau}$ (resp. $\overset{\cdot}{\sigma}$) and call it
% \emph{Refinement Intersection Type} (resp. \emph{Scheme}). Note that
% refinements may come from the user defined set $\mathbb{Q}$ or from
% the constants and sub-derivations. In the latter case the refinement
% expressions are arbitrary expressions from $E$.

To describe the execution behavior of our language we use a small-step
contextual operational semantics, whose rules are shown in Figure
\ref{fig:op}.  The relation $M \leadsto N$ describes a single
evaluation step from term $M$ to $N$. The rules $[\mathcal{E}-\beta]$,
$[\mathcal{E}-Let]$ and $[\mathcal{E}-Compat]$ are standard for a
call-by-value ML-like language. The rule
$[\textit{$\mathcal{E}$-Constant}]$ evaluates an 
application with a constant in the function position. This rule relies
on the embedding $\left\llbracket\cdot\right\rrbracket$ of terms into
a decidable logic \cite{Nelson:1980:TPV:909447} (the definition of
this embedding, as well as the details of the used logic, will be made
clear in next section).

% A program written in our language can present arbitrary
% terms at application, but at run-time we restrict the arguments of
% applications to be variables and so sub-expressions are translated to
% nested let-bindings. This transformation is closely related with 
% \emph{A}-Normal Forms \cite{Flanagan:1993:ECC:155090.155113} and is
% performed to force types of intermediate expressions to be pushed into
% the typing context, so they can be used at the moment of
% application. %The set of language values is then enriched with variables.

\begin{figure}
\small
  \[
  \begin{array}{rrlr}
    V & ::= & & \textit{Values:} \\
    & | & \mathrm{c} & \textit{constant} \\
    & | & \lambda x.M & \textit{abstraction} \\
    \mathrm{\mathbf{Contexts}} & & & \boxed{C} \\
    C & ::= & & Contexts: \\
    & | & [\;] & \textit{hole} \\
    & | & C \: M & \textit{left application} \\
    & | & V \: C & \textit{right application} \\
    & | & \mathrm{let} \; x = C \; \mathrm{in} \; M &
    \textit{let-context} \\ 
    \mathrm{\mathbf{Evaluation}} & & & \boxed{M\leadsto N} \\
    \mathrm{c} \: V & \leadsto &
    \left\llbracket\mathrm{c}\right\rrbracket(V) &
    [\mathcal{E}-Constant] \\ 
    (\lambda x.M) V & \leadsto & [V/x]M & [\mathcal{E}-\beta] \\
    \mathrm{let} \; x = V \; \mathrm{in} \; M & \leadsto & [V/x]M &
    [\mathcal{E}-Let] \\
    C[M] & \leadsto & C[N] \; if \; M\leadsto N & [\mathcal{E}-Compat]
  \end{array}
  \]
  \caption{Small-step operational semantics.}
  \label{fig:op}
\end{figure}

%\vspace{-1.5pt}

%  For instance, for the term $- 2$ and
% assuming $ty(-) =x:int\rightarrow\left\{\nu : int \:
%   | \: \nu = - y\right\}$, the evaluation would proceed as
% follows: $- 2\leadsto\left\llbracket - \right\rrbracket (2)\leadsto -2$

\subsection{Typing rules}
\label{sec:ty-rules}

We present our typing rules via the collection of derivation rules
shown in Figure \ref{fig:type-system}. We present three different
judgments: \textbf{type judgment}, of the form
$\judgment{M}{\sigma}$ meaning that term $M$ has
type $\sigma$ under environment
$\Gamma$, 
restricted to the qualifiers contained in $\mathbb{Q}$, i.e., only
expressions from the set $\mathbb{Q}$ can be used as refinement
predicates for the following terms: let bindings,
$\lambda$-abstractions and type instantiations;
\textbf{subtype judgment} $\Gamma\vdash^\cap\sigma_1
\prec   
\sigma_2$, stating that $\sigma_1$
is a subtype of $\sigma_2$ under 
the conditions of environment $\Gamma$; and the
\textbf{well-formedness judgment}
$\Gamma\vdash^\cap\sigma$ indicating 
that variables referred by the refinements of
$\sigma$ are in the 
scope of corresponding expressions. The well-formedness judgment can
be lifted to well-formedness of environments, by stating that an
environment is well-formed if for every binding, types are well-formed
with respect to the prefix environment.
This well-formedness restriction
implies the absence of the structural property of exchange in our
system, since by permuting the bindings in $\Gamma$ one could generate
an inconsistent environment 

\begin{figure}
\small
\begin{types}[b]{\textwidth}
\textbf{Liquid Intersection Type checking}
\hfill
$\boxed{\Gamma\vdash^\cap_{\mathbb{Q}} M :  {\sigma}}$
\[
\begin{array}{c}
\inferrule*[Lab={\scriptsize Sub}]
{
  \Gamma\vdash^\cap_{\mathbb{Q}} M: {\sigma}_1 \\
  \Gamma\vdash^\cap {\sigma}_1\prec {\sigma}_2
  \\  
  \Gamma\vdash^\cap {\sigma}_2
}
{
  \Gamma\vdash^\cap_{\mathbb{Q}} M:\sigma_2
} \qquad

\inferrule*[Lab=\scriptsize Intersect]
{
  \Gamma\vdash^\cap_{\mathbb{Q}} M:{\tau}_1 \\
  \Gamma\vdash^\cap_{\mathbb{Q}} M:{\tau}_2 \\
  {\tau}_1\cap{\tau}_2 :: T
}
{
  \Gamma\vdash^\cap_{\mathbb{Q}}
  M:{\tau}_1\cap{\tau}_2 
} \\[10pt]

\inferrule*[Lab=\scriptsize Var-B]
{
  \Gamma(x) =
  {\tau}_1\cap\ldots\cap{\tau}_n \\ 
  {\tau}_i :: B \,(\forall i :  1 \leq i \leq n)
}
{
  \Gamma\vdash^\cap_{\mathbb{Q}} x:\{v: B | v = x\}
} \qquad

\inferrule*[Lab=\scriptsize Var]
{
  \Gamma(x) \: \textrm{not a base type} \\ \Gamma(x) :: T
}
{
  \Gamma \vdash^\cap_{\mathbb{Q}} x : \Gamma(x)
} \\[10pt]

\inferrule*[Lab=\scriptsize App]
{
  \Gamma\vdash^\cap_{\mathbb{Q}}
  M:(x:{\tau}_x\rightarrow {\tau}) \\
  \Gamma\vdash^\cap_{\mathbb{Q}} N:{\tau}_x 
}
{
  \Gamma\vdash^\cap_{\mathbb{Q}} MN:[N/x]{\tau}
} \qquad

\inferrule*[Lab=\scriptsize Fun]
{
  \Gamma;x:\hat{\tau}_x\vdash^\cap_{\mathbb{Q}} M:\hat{\tau} \\
  \Gamma\vdash^\cap\hat{\tau_x}\rightarrow\hat{\tau} \\
  \hat{\tau} :: T
}
{
  \Gamma\vdash^\cap_{\mathbb{Q}}\lambda x.M :
  (x : \hat{\tau}_x\rightarrow\hat{\tau})
} \\[10pt]

\inferrule*[Lab=\scriptsize Const]
{
}
{
  \Gamma\vdash^\cap_{\mathbb{Q}}\mathrm{c} : ty(\mathrm{c})
} \qquad

\inferrule*[Lab=\scriptsize Let]
{
  \Gamma\vdash^\cap_{\mathbb{Q}} M : {\sigma} \\
  \Gamma;x:{\sigma}\vdash^\cap_{\mathbb{Q}} N :
  \hat{\tau} \\
  \Gamma\vdash^\cap \hat{\tau}
}
{
  \Gamma\vdash^\cap_{\mathbb{Q}} \mathrm{let} \; x = M \; \mathrm{in}
  \; N : \hat{\tau} 
} \\[10pt]

\inferrule*[Lab=\scriptsize Gen]
{
  \judgment{M}{\sigma} \\
  \alpha\not\in\Gamma
}
{
  \judgment{[\Lambda\alpha]M}{\forall\alpha.{\sigma}}
} \qquad

\inferrule*[Lab=\scriptsize Inst]
{
  \judgment{M}{\forall\alpha.{\sigma}} \\
  \Gamma\vdash^\cap \hat{\tau}\\
  \textrm{Shape}(\hat{\tau}) = T
}
{
  \judgment{[T]M}{[\hat{\tau}/\alpha]{\sigma}}
}
\end{array}
\]

\textbf{Subtyping} \hfill $\boxed{\Gamma\vdash^\cap
  {\sigma_1} \prec {\sigma_2}}$
\[
\begin{array}{c}
  \inferrule*[Lab=\scriptsize $\prec$-Base]
  {
    \mathtt{Valid}(\left\llbracket\Gamma
    \right\rrbracket\wedge(\left\llbracket  
      E_1 
    \right\rrbracket\wedge\ldots\wedge\left\llbracket
      E_n\right\rrbracket)\Rightarrow(\left\llbracket
      E_{1}'\right\rrbracket\wedge\ldots\wedge\left\llbracket
      E_{m}'\right\rrbracket))      
  }
  {
    \Gamma\vdash^\cap \left\{v : B \: | \:
      E_1\right\}\cap\ldots\cap\left\{v : B \: | \: E_n\right\}
    \prec \left\{v : B \: | \:
      E_{1}'\right\}\cap\ldots\cap\left\{v : B \: | \: E_{m}'\right\}
  } \\[10pt]

  \inferrule*[Lab=\scriptsize $\prec$-Intersect-Fun]
  {
  }
  {
    \Gamma\vdash^\cap(x:{\tau}_x
    \rightarrow{\tau}_1)\cap  
    (x:{\tau}_x\rightarrow{\tau}_2)\prec
    (x:{\tau}_x\rightarrow
    {\tau}_1\cap{\tau}_2)      
  } \qquad 
  \inferrule*[Right=$i\in\{1{,}2\}$, Lab=\scriptsize $\prec$-Elim]
  {
  }
  {
    \Gamma\vdash^\cap
    {\tau}_1\cap
    {\tau}_2\prec{\tau}_i    
  } 
  \\[10pt]
  
  \inferrule*[Lab=\scriptsize $\prec$-Fun]
  {
    \Gamma\vdash^\cap
    {\tau}_x'\prec{\tau}_x \\  
    \Gamma;x:{\tau}_x'\vdash^\cap
    {\tau}\prec{\tau}' 
  }
  {
    \Gamma\vdash^\cap
    x:{\tau}_x\rightarrow{\tau}\prec
    x:{\tau}_x'\rightarrow{\tau}' 
  } \hspace{50pt}
  
  \inferrule*[Lab=\scriptsize $\prec$-Var]
  {
  }
  {
    \Gamma\vdash^\cap \alpha\prec\alpha
  } \\[10pt]

  \inferrule*[Lab=\scriptsize $\prec$-Intersect]
  {
    \Gamma\vdash^\cap {\tau}\prec{\tau}_1 \\
    \Gamma\vdash^\cap {\tau}\prec{\tau}_2
  }
  {
    \Gamma\vdash^\cap
    {\tau}\prec
    {\tau}_1\cap{\tau}_2   
  } \hspace{50pt}

  \inferrule*[Lab=\scriptsize $\prec$-Poly]
  {
    \Gamma\vdash^\cap {\sigma}_1\prec
    {\sigma}_2 
  }
  {
    \Gamma\vdash^\cap\forall\alpha.{\sigma}_1\prec
    \forall\alpha.{\sigma}_2 
  }
\end{array}
\]
\textbf{Well formed types} \hfill $\boxed{\Gamma\vdash^\cap
  {\sigma}}$ 
\[
\begin{array}{c}
  \inferrule*[Lab=\scriptsize WF-B]
  {
    \Gamma;\nu:B \vdash^\cap E : bool
  }
  {
    \Gamma\vdash^\cap \left\{\nu: B \: | \: E\right\}
  } \hspace{50pt}
  
  \inferrule*[Lab={\scriptsize WF-Var}]
  {
  }
  {
    \Gamma\vdash^\cap\alpha
  } \\[10pt]

  \inferrule*[Lab=\scriptsize WF-Fun]
  {
    \Gamma;x: {\tau}_x \vdash^\cap {\tau}
  }
  {
    \Gamma \vdash^\cap
    x:{\tau}_x\rightarrow{\tau} 
  }
  \hspace{50pt}

  \inferrule*[Lab=\scriptsize WF-Poly]
  {
    \Gamma\vdash^\cap{\sigma}
  }
  {
    \Gamma\vdash^\cap\forall\alpha.{\sigma}
  } \\[10pt]
  
  \inferrule*[Lab=\scriptsize WF-Intersect]
  {
    \Gamma\vdash^\cap {\tau}_1 \\ \Gamma\vdash^\cap
    {\tau}_2 
  }
  {
    \Gamma\vdash^\cap {\tau}_1\cap{\tau}_2
  } 
\end{array}
\]
\end{types}
\caption{Typing rules for Liquid Intersection Types.}
\label{fig:type-system}
\end{figure}

The rule [{\sc App}] conforms to the dependent types discipline, since
the type of an application $MN$ is the return type of $M$ but with
every occurrence of $x$ in the refinements substituted by $N$. % As we
% assumed the Barendregt convention, variables are $\alpha$-renamed in
% order to ensure substitutions avoid capture.
% We
% assume variables are $\alpha$-renamed in order to ensure that
% substitutions, such as in this rule, avoid capture.

Another point
worth mentioning is the distinction made when the type of a variable
is to be retrieved, rules [{\sc Var-B}] and [{\sc Var}]. Whenever the
type of the variable $z$ is an intersection of refined basic type we
ignore these refinements and assign $z$ the type $\left\{\nu : B \: |
  \: \nu = z\right\}$, for some basic type $B$. This is inspired on
the system of Liquid Types \cite{Rondon:2008:LT:1375581.1375602},
since this assigned refined type is very useful when it comes to use
in subtyping, especially with the rule [$\prec$-Base]. When this is
not the case, the type of a variable is the one stored in $\Gamma$.

One novel aspect of this system is the presence of the [{\sc
  Intersect}] rule, which allows to intersect two types that have been 
derived for the same term. The use of this rule increases the
expressiveness of the types language itself, since more detailed
types can be derived for a program.

The subtyping relation presents some typical rules for a system with
intersection types. These allow to capture the relations at the level
of intersections in types, with no concern for the refinements of the
two types being compared. On the other side, comparing two refined
base types reduces to the check of an implication formula between the
refinement expressions. Our system uses a decidable notion of
implication in the rule [$\prec$-Base], by embedding environments and
refinement expressions into a decidable logic. This logic
contains at least equality, uninterpreted functions and linear
arithmetic. This is the core logical setting of most state-of-the-art
SMT solvers. The embedding $\left\llbracket M \right\rrbracket$
translates the term $M$ to the correspondent one in the
logic (if it is the case $M$ is a constant or an arithmetic
operator), or if $M$ is a $\lambda$-abstraction or an application
encodes it via uninterpreted functions. The embedding of environments
is defined as
\[
\left\llbracket\Gamma\right\rrbracket \triangleq \bigwedge\left\{(
  \left\llbracket E_1 \right\rrbracket\wedge\ldots
  \wedge\left\llbracket E_n \right\rrbracket)[x/\nu] \: | \: x :
  \left\{\nu : B \: | \: E_1 \right\}\cap\ldots 
  \cap\left\{\nu : B \: | \: E_n \right\} \in \Gamma \right\}  
\]

Given that every implication expression generated in rule [$\prec$-Base] is
decidable, it is then suitable to be discharged by some automatic
theorem prover. So, type-checking in our system can be
seen as a \emph{typing-and-proof} process.

% \inferrule*[Left=\tiny{Sub}]
%       {
%         \inferrule*[Left=\tiny{B-Var}]
%         {
%           (x:\left\{\nu \geq 0\right\})(x) = \left\{\nu\geq 0\right\}
%         }
%         {
%           x: \left\{\nu \geq 0\right\}\vdash^\cap x: \left\{\nu  = x
%           \right\} 
%         } \\
%         \inferrule*[Right=\tiny{$\prec$-Base}]
%         {
%           x: \left\{\nu \geq 0\right\};\nu:int\vdash^\cap (\nu =
%           x)\Rightarrow(\nu\geq0) 
%         }
%         {
%           x: \left\{\nu \geq 0\right\}\vdash^\cap \left\{\nu =
%           x\right\}\prec\left\{\nu\geq0\right\} 
%         }

We show an example of a derivation for the term $\lambda x.-x$,
assuming $\mathbb{Q} = \{\nu \geq 0, \nu \leq 0\}$. With
$\Gamma = x:\left\{\nu\geq 0\right\}$, consider:
\[
\small
\begin{array}{lc} 
  \mathcal{D}_1': & \\
  & \inferrule*
  {
    \inferrule*[Left=\footnotesize Const]
    {
    }
    {
      \Gamma\vdash^\cap_{\mathbb{Q}} - :
      (y:int\rightarrow\left\{\nu=-y\right\}) 
    } \\
    \inferrule*[Right=\footnotesize Sub]
    { 
      \inferrule*[Left=\footnotesize Var-B]
      {
        \Gamma(x) = \left\{\nu \geq 0\right\}
      }
      {
        \Gamma\vdash^\cap_{\mathbb{Q}} x: \left\{\nu = x
        \right\} 
      } \\
      \inferrule*[Right=\footnotesize$\prec$-Base]
      {
        \mathrm{Valid}(x \geq 0 \wedge \nu = x \Rightarrow \top)
      }
      {
        \Gamma\vdash^\cap \left\{\nu = x
        \right\}\prec int
      }
    }
    {
      \Gamma\vdash^\cap_{\mathbb{Q}} x: int
    }
  }
  {
    \Gamma\vdash^\cap_{\mathbb{Q}} -x :
    \left\{\nu=-x\right\}
  }
\end{array}
\]
and:
\[
\small
\begin{array}{lc} 
  \mathcal{D}_1: & \\
  & \inferrule*[Right=\footnotesize Fun]
  {
    \inferrule*[Right=\footnotesize Sub]
    {
      \inferrule*[]
      {}
      {
        \mathcal{D}_1'
      }\\
      \inferrule*[Right=\footnotesize $\prec$-Base]
      {
        \mathrm{Valid}(x\geq 0 \wedge \nu=-x\Rightarrow \nu\leq 0)
      }
      {
        \Gamma\vdash^\cap\left\{\nu=-x\right\}\prec\left\{\nu\leq 0\right\}
      }
    }
    {
      \Gamma\vdash^\cap_{\mathbb{Q}} -x: \left\{\nu \leq 0\right\}
    }
  }
  {
    \vdash^\cap_{\mathbb{Q}} \lambda x.-x : (x:\left\{\nu \geq
      0\right\}\rightarrow\left\{\nu \leq 0\right\} )
  } 
\end{array}
\]

We can also derive
$\vdash^\cap_{\mathbb{Q}}\lambda x.-x : (x:\left\{\nu \leq
  0\right\}\rightarrow\left\{\nu \geq 0\right\})$ (similarly to the
previous derivation, with the corresponding $\leq$ and $\geq$ symbols
changed). Naming that derivation $\mathcal{D}_2$, we
finally have: 
\[
\small
\inferrule*[Right=\footnotesize{Intersect}]
    {
      \mathcal{D}_1 \\
      \mathcal{D}_2
    } 
    { \vdash^\cap_{\mathbb{Q}} \lambda x.-x : (x:\left\{\nu \geq 0\right\}\rightarrow\left\{\nu \leq 0\right\})\cap(x:\left\{\nu \leq 0\right\}\rightarrow\left\{\nu \geq 0\right\}) 
    }
\]
We omit the well-formedness and well-founded sub-derivations, since
they are trivially constructed and use $int$ to denote the type $\{\nu
: int \: | \: \top\}$, that is, the common type for integer
values. 

\subsection{Properties}
\label{sec:properties}

In order to prove soundness properties for our system we follow the
approach of
\cite{Rondon:2008:LT:1375581.1375602,Vazou:2013:ART:2450268.2450286}.
The decidable notion of implication checking employed by the subtyping
rules is a problem when it comes to prove a substitution lemma. So,
instead we prove subject reduction for a version of the system with
undecidable subtyping and unrestricted expressions in refinement
predicates. The typing judgment in this system will be denoted by
$\Gamma\vdash^\cap M : \sigma$, and the inference rules are presented
in Figures~\ref{fig:dep-type-system} and~\ref{fig:wf-cs-rules}. Then,
we show that any derivation in the decidable system has a counter-part
in the undecidable one. We present in this section the more
interesting steps employed during the proof of subject reduction for
our type system. The detailed proofs can be found in~\cite{Pereira14}. 

%These results can be formalized as:

% \vspace{5pt}

% \begin{theorem}[Subject Reduction]

%   \begin{itemize}
%   \item (Overapproximation) If $\judgment{M}{\sigma}$ then $\Gamma\vdash^\cap M : \sigma$;
%   \item (Subject reduction) If $\Gamma\vdash^\cap M : \sigma$ and $M\leadsto N$ then
%     $\Gamma\vdash^\cap N : \sigma$.
%   \end{itemize}
% \end{theorem}

\begin{figure}
\begin{types}[b]{\textwidth}
\small
\textbf{Refinement Intersection type checking}
\hfill
$\boxed{\Gamma\vdash^\cap M : \sigma}$
\[
\begin{array}{cc}
\inferrule*[Lab=\scriptsize Sub]
{
  \Gamma\vdash^\cap M:\sigma_1 \\
  \Gamma\vdash^\cap\sigma_1\prec\sigma_2 \\ 
  \Gamma\vdash^\cap \sigma_2
}
{
  \Gamma\vdash^\cap M:\sigma_2
} &

\inferrule*[Lab=\scriptsize Intersect]
{
  \Gamma\vdash^\cap M:\tau_1 \\
  \Gamma\vdash^\cap M:\tau_2 \\
  \tau_1\cap\tau_2 :: T
}
{
  \Gamma\vdash^\cap M:\tau_1\cap\tau_2
} \\[10pt]

\inferrule*[Lab=\scriptsize Var-B]
{
  \Gamma(x) = \tau_1\cap\ldots\cap\tau_n \\ \tau_i :: B \,
  (\forall i :  1 \leq i \leq n)
}
{
  \Gamma\vdash^\cap x:\{\nu: B \: | \: \nu = x\}
} &

\inferrule*[Lab=\scriptsize Var]
{
  \Gamma(x) \: \textrm{not a base type} \\ \Gamma(x) :: T
}
{
  \Gamma \vdash^\cap x : \Gamma(x)
} \\[10pt]

\inferrule*[Lab=\scriptsize App]
{
  \Gamma\vdash^\cap M:(x:\tau_x\rightarrow\tau) \\
  \Gamma\vdash^\cap N:\tau_x
}
{
  \Gamma\vdash^\cap MN:[N/x]\tau
} &

\inferrule*[Lab=\scriptsize Fun]
{
  \Gamma;x:\tau_x\vdash^\cap M:\tau \\
  \Gamma\vdash^\cap\tau_x\rightarrow\tau \\
  \tau :: T
}
{
  \Gamma\vdash^\cap\lambda x.M : (x:\tau_x\rightarrow\tau) 
} \\[10pt]

\inferrule*[Lab=\scriptsize Const]
{
}
{
  \Gamma\vdash^\cap\mathrm{c} : ty(\mathrm{c})
} &

\inferrule*[Lab=\scriptsize Let]
{
  \Gamma\vdash^\cap M : \sigma \\
  \Gamma;x:\sigma\vdash^\cap N : \tau \\
  \Gamma\vdash^\cap \tau
}
{
  \Gamma\vdash^\cap \mathrm{let} \; x = M \; \mathrm{in} \; N : \tau
} \\[10pt]

\inferrule*[Lab=\scriptsize Gen]
{
  \Gamma\vdash^\cap M : \sigma \\
  \alpha\not\in\Gamma
}
{
  \Gamma\vdash^\cap [\Lambda\alpha]M : \forall\alpha.\sigma
} &

\inferrule*[Lab=\scriptsize Inst]
{
  \Gamma\vdash^\cap M : \forall\alpha.\sigma \\
  \Gamma\vdash^\cap\tau \\
  \textrm{Shape}(\tau) = T
}
{
  \Gamma\vdash^\cap [T]M : [\tau/\alpha]\sigma
}
\end{array}
\]

\textbf{Implication} \hfill $\boxed{\Gamma\vdash^\cap E \Rightarrow E'}$
\[
\inferrule*[Lab=\scriptsize Imp]
{
  \Gamma\vdash^\cap E : bool \\ 
  \Gamma\vdash^\cap E': bool \\
  \forall\rho.(\Gamma\models\rho\; \textrm{and}\;
  \rho (E)\overset{*}{\leadsto}\top\; \textrm{implies}\;
  \rho (E')\overset{*}{\leadsto}\top) 
}
{
  \Gamma\vdash^\cap E\Rightarrow E'
}
\]

\textbf{Subtyping} \hfill $\boxed{\Gamma\vdash^\cap \sigma_1 \prec
  \sigma_2}$
\[
\begin{array}{c}
  \inferrule*[Lab=\scriptsize $\prec$-Base]
  {
    \Gamma; \nu:B \vdash^\cap
    E_1\wedge\ldots\wedge E_n\Rightarrow
    E_1'\wedge\ldots\wedge E_m' 
  }
  {
    \Gamma\vdash^\cap \left\{\nu : B \: | \:
      E_1\right\}\cap\ldots\cap\left\{\nu : B \: | \: E_n\right\}
    \prec \left\{\nu : B \: | \:
      E_{1}'\right\}\cap\ldots\cap\left\{\nu : B \: | \: E_{m}'\right\}
  } \\[10pt]

  \inferrule*[Lab=\scriptsize $\prec$-Intersect-Fun]
  {
  }
  {
    \Gamma\vdash^\cap(x:\tau_x\rightarrow\tau_1)
    \cap(x:\tau_x\rightarrow\tau_2)\prec  
    (x:\tau_x\rightarrow\tau_1\cap\tau_2)   
  } \qquad 

  \inferrule*[Right=$i\in\{1{,}2\}$, Lab=\scriptsize $\prec$-Elim]
  {
  }
  {
    \Gamma\vdash^\cap \tau_1\cap\tau_2\prec\tau_i 
  } \\[10pt]
  
  \inferrule*[Lab=\scriptsize $\prec$-Fun]
  {
    \Gamma\vdash^\cap \tau_x'\prec\tau_x \\ 
    \Gamma;x:\tau_x'\vdash^\cap \tau\prec\tau'
  }
  {
    \Gamma\vdash^\cap (x:\tau_x\rightarrow\tau)\prec
    (x:\tau_x'\rightarrow\tau') 
  } \hspace{50pt}
  
  \inferrule*[Lab=\scriptsize $\prec$-Var]
  {
  }
  {
    \Gamma\vdash^\cap \alpha\prec\alpha
  } \\[10pt]

  \inferrule*[Lab=\scriptsize $\prec$-Intersect]
  {
    \Gamma\vdash^\cap \tau\prec\tau_1 \\
    \Gamma\vdash^\cap \tau\prec\tau_2
  }
  {
    \Gamma\vdash^\cap \tau\prec\tau_1\cap\tau_2
  } \hspace{50pt}

  \inferrule*[Lab=\scriptsize $\prec$-Poly]
  {
    \Gamma\vdash^\cap \sigma_1\prec\sigma_2
  }
  {
    \Gamma\vdash^\cap\forall\alpha.\sigma_1\prec\forall\alpha.\sigma_2
  }
\end{array}
\]
\end{types}
\caption{Refinement Intersection typing rules}
\label{fig:dep-type-system}
\end{figure}

\begin{figure}
\begin{types}[b]{\textwidth}
    \small
    \textbf{Well formed types} \hfill $\boxed{\Gamma\vdash^\cap \sigma}$
\[
\begin{array}{c}
  \inferrule*[Lab=\scriptsize WF-B]
  {
    \Gamma;\nu:B \vdash^\cap \phi : bool
  }
  {
    \Gamma\vdash^\cap \left\{\nu: B \: | \: \phi\right\}
  } \hspace{75pt}
  
  \inferrule*[Lab=\scriptsize WF-Var]
  {
  }
  {
    \Gamma\vdash^\cap\alpha
  } \\[10pt]

  \inferrule*[Lab=\scriptsize WF-Fun]
  {
    \Gamma;x: \tau_x \vdash^\cap \tau
  }
  {
    \Gamma \vdash^\cap (x:\tau_x\rightarrow\tau)
  }
  \hspace{75pt}

  \inferrule*[Lab=\scriptsize WF-Poly]
  {
    \Gamma\vdash^\cap\sigma
  }
  {
    \Gamma\vdash^\cap\forall\alpha.\sigma
  } \\[10pt]
  
  \inferrule*[Lab=\scriptsize WF-Intersect]
  {
    \Gamma\vdash^\cap \tau_1 \\ \Gamma\vdash^\cap \tau_2
  }
  {
    \Gamma\vdash^\cap \tau_1\cap\tau_2
  } 
\end{array}
\]

\textbf{Consistent substitutions} \hfill $\boxed{\Gamma\models\rho}$
\[
\begin{array}{lr}
  \inferrule*[Lab=\scriptsize CS-Empty]
  {
  }
  {
    \emptyset \models \emptyset
  } \quad & \quad
  \inferrule*[Lab=\scriptsize CS-Ext]
  {
    \Gamma\models\rho \\
    \emptyset\vdash^\cap V : \rho(\sigma)
  }
  {
    \Gamma;x:\sigma \models \rho; [V/x]
  }
\end{array}
\]
\end{types}
\caption[Well formed Types and consistent substitutions.]{Rules for
  well formed Refinement Intersection Types and 
  consistent substitutions.}
\label{fig:wf-cs-rules}
\end{figure}

\begin{mydef}[Constants]
  Each constant \textrm{c} has a type $ty(\textrm{c})$ such that:
  \begin{enumerate}
  \item $\emptyset\vdash^\cap ty(\textrm{c})$;
  \item if \textrm{c} is a primitive function then it cannot get
    stuck, thus if $\Gamma\vdash^\cap\mathrm{c}\: v$ then
    $\llbracket\mathrm{c}\rrbracket(v)$ is defined and if
    $\Gamma\vdash^\cap \mathrm{c} \:M : \sigma$ and
    $\llbracket\mathrm{c}\rrbracket(M)$ is defined then
    $\Gamma\vdash^\cap \llbracket\mathrm{c}\rrbracket\:(M) : \sigma$;
  \item if $ty(\textrm{c})$ is $\left\{\nu:B \: | \: \phi\right\}$
    then $\phi\equiv\nu = c$.
  \end{enumerate}
\label{def:constants}
\end{mydef}

\begin{mydef}[Embedding]
  The embedding $\left\llbracket\cdot\right\rrbracket$ is defined as a
  map from terms and environments to formulas in the decidable logic
  such that for all $\Gamma, E, E'$ if $\Gamma\vdash^\cap
  E:bool$, $\Gamma\vdash^\cap E':bool$,
  $\mathrm{Valid}(\llbracket\Gamma\rrbracket
  \wedge\llbracket E \rrbracket\Rightarrow E')$,    
  then $\Gamma\vdash^\cap E \Rightarrow E'$.
\end{mydef}

\begin{mydef}[Substitution]
  We define substitution on types, $\rho(\sigma)$, as follows:
  \[
  \begin{array}{rcl}
    \rho(\alpha) & = & \alpha \\
    \rho(\left\{\nu : B \: | \: E\right\}) & = & \left\{\nu : B \: | \:
      \rho (E)\right\} \\
    \rho(x:{\tau}_x\rightarrow{\tau}) & = &
    x:\rho({\tau}_x)\rightarrow \rho({\tau}) \\
    \rho(\forall\alpha.{\sigma}) & = & \forall\alpha.\rho({\sigma}) \\
    \rho({\tau}_1\cap{\tau}_2) & = &
    \rho({\tau}_1) \cap \rho({\tau}_2) 
  \end{array}
  \]
\end{mydef}

A substitution can be lifted to typing contexts as expected:
\[
\begin{array}{rcl}
  \rho(\emptyset) & = & \emptyset \\
  \rho(\Gamma;x : \sigma) & = & \rho(\Gamma);x : \rho(\sigma)
\end{array}
\]

\begin{mydef}[Domain of a substitution]
  The domain of a substitution, $\mathrm{\mathbf{Dom}}(\rho)$, is
  defined as follows:
  \[
  \begin{array}{rcl}
    \mathrm{\mathbf{Dom}}(\emptyset) & = & \{\} \\
    \mathrm{\mathbf{Dom}}(\rho;[V/x]) & = &
    \mathrm{\mathbf{Dom}}(\rho) \cup \{x\}
  \end{array}
  \]
\end{mydef}

\begin{lemma}[Substitution permutation]
  If $\Gamma\models\rho_1;\rho_2$ then
  \begin{enumerate}
  \item $\mathrm{\mathbf{Dom}}(\rho_1)\cap
    \mathrm{\mathbf{Dom}}(\rho_2) = \emptyset$;
  \item for all Liquid Intersection Type $\sigma$,
    $\rho_1;\rho_2(\sigma) = \rho_2;\rho_1(\sigma)$.
  \end{enumerate}
\label{lemma:permutation}
\end{lemma}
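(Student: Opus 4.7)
The plan is to prove the two parts of the lemma one after the other, with part (1) establishing disjointness of domains and part (2) leveraging that fact together with the closedness of the substituted values.

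\smallskip

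For part (1), I would proceed by induction on the length of $\rho_2$ (equivalently, on the derivation of $\Gamma\models\rho_1;\rho_2$). The base case, where $\rho_2 = \emptyset$, is immediate since $\mathbf{Dom}(\rho_2) = \emptyset$. For the inductive step, suppose $\rho_2 = \rho_2';[V/x]$. By inspection of rule \textsc{CS-Ext}, the only way to derive $\Gamma\models\rho_1;\rho_2';[V/x]$ is for $\Gamma$ to have the form $\Gamma';x:\sigma$ with $\Gamma'\models\rho_1;\rho_2'$ and $\emptyset\vdash^\cap V:(\rho_1;\rho_2')(\sigma)$. By the induction hypothesis, $\mathbf{Dom}(\rho_1)\cap\mathbf{Dom}(\rho_2')=\emptyset$. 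The Barendregt convention imposed on bound variables, together with the fact that environments are well-formed (so each extension $\Gamma';x:\sigma$ introduces a name not already bound in $\Gamma'$), forces $x\notin\mathbf{Dom}(\rho_1)\cup\mathbf{Dom}(\rho_2')$, giving the required disjointness.

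\smallskip

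For part (2), the essential observation extracted from \textsc{CS-Ext} is that every value $V$ occurring in either $\rho_1$ or $\rho_2$ is typable in the empty environment, and is therefore a \emph{closed} term — in particular, it contains no occurrence of any variable in $\mathbf{Dom}(\rho_1)\cup\mathbf{Dom}(\rho_2)$. Given this, I would proceed by structural induction on the Liquid Intersection Type $\sigma$ (following the grammar of pretype skeletons combined with the scheme formers), showing that $\rho_1;\rho_2(\sigma)=\rho_2;\rho_1(\sigma)$ in each case. For type variables $\alpha$, both sides are $\alpha$. For a refined base type $\{\nu:B\mid E\}$, the result reduces to the analogous commutation property on the refinement expression $E$, which holds because each variable $y\in\mathbf{Dom}(\rho_1)\cup\mathbf{Dom}(\rho_2)$ is replaced by a closed value, and the two domains are disjoint by part (1). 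The cases for $x:\tau_x\rightarrow\tau$, $\tau_1\cap\tau_2$, and $\forall\alpha.\sigma'$ follow directly by the induction hypothesis applied to subcomponents (for the function case, the dependent binder $x$ is fresh by Barendregt, so it does not clash with either domain).

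\smallskip

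The main obstacle is really a bookkeeping one rather than a conceptual one: one has to be careful that the recursive invocation of substitution on refinement expressions genuinely commutes, which in turn depends on the closedness of substituted values. Without closedness, a value $V$ in $\rho_1$ could contain a free variable bound later by $\rho_2$, and the two orders of application would differ. Once this invariant from \textsc{CS-Ext} is made explicit, the rest is a routine structural induction.
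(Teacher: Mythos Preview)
Your proposal is correct and matches the paper's own proof: part (1) by induction on the derivation of $\Gamma\models\rho_1;\rho_2$, part (2) by structural induction on $\sigma$. You have simply fleshed out the details (closedness of the substituted values via \textsc{CS-Ext}, uniqueness of environment bindings) that the paper leaves implicit in its one-line sketches.
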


\begin{proof}
  \begin{enumerate}
  \item By induction on the derivation $\Gamma\models\rho_1;\rho_2$,
    splitting cases on which rule was used at the bottom.
  \item By induction on the structure of $\sigma$. \qed
  \end{enumerate}
\end{proof}

\begin{lemma}[Well-formed substitutions] \hfill
  \begin{enumerate}
  \item If $\Gamma\models\rho_1;\rho_2$ then there are
    $\Gamma_1,\Gamma_2$ such that $\Gamma=\Gamma_1;\Gamma_2$,
    $\mathrm{\mathbf{Dom}}(\rho_1) = \mathrm{\mathbf{Dom}}(\Gamma_1)$,
    $\mathrm{\mathbf{Dom}}(\rho_2) = \mathrm{\mathbf{Dom}}(\Gamma_2)$;
  \item $\Gamma_1;\Gamma_2\models\rho_1;\rho_2,
    \mathrm{\mathbf{Dom}}(\rho_1) = \mathrm{\mathbf{Dom}}(\Gamma_1),
    \mathrm{\mathbf{Dom}}(\rho_2) = \mathrm{\mathbf{Dom}}(\Gamma_2)$
    iff $\Gamma_1\models\rho_1$, $\rho_1\Gamma_2\models\rho_2$.
  \end{enumerate}
\label{lemma:wf-subs}
\end{lemma}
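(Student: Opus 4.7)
The plan is to prove both parts by induction on the length of $\rho_2$ (equivalently, on the derivation $\Gamma\models\rho_1;\rho_2$), exploiting the fact that by the rules CS-Empty and CS-Ext the substitution and the environment grow in lockstep. Since the only two rules concluding a consistent substitution judgment are CS-Empty and CS-Ext, inversion on the last rule applied provides enough structural information at each inductive step.

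For part (1), I would proceed by induction on the number of bindings in $\rho_2$. If $\rho_2 = \emptyset$, take $\Gamma_2 = \emptyset$ and the claim holds trivially, with $\Gamma_1 = \Gamma$ matching $\mathrm{\mathbf{Dom}}(\rho_1)$. If $\rho_2 = \rho_2';[V/x]$, then the derivation of $\Gamma\models\rho_1;\rho_2';[V/x]$ must end with CS-Ext, forcing $\Gamma = \Gamma';x:\sigma$ for some $\sigma$ together with $\Gamma'\models\rho_1;\rho_2'$. The induction hypothesis yields a split $\Gamma' = \Gamma_1;\Gamma_2'$ with matching domains, and we set $\Gamma_2 = \Gamma_2';x:\sigma$.

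For part (2), both directions go by an analogous induction on $|\rho_2|$. In the forward direction, the inductive case inverts the last CS-Ext step to obtain the side condition $\emptyset \vdash^\cap V : (\rho_1;\rho_2')(\sigma)$; invoking Lemma~\ref{lemma:permutation} (using part 1 to check disjointness and part 2 to commute) rewrites this as $\emptyset \vdash^\cap V : \rho_2'(\rho_1(\sigma))$, and combined with the induction hypothesis $\rho_1\Gamma_2'\models\rho_2'$ a fresh application of CS-Ext yields $\rho_1\Gamma_2\models\rho_2$. The reverse direction is symmetric: invert the derivation of $\rho_1\Gamma_2\models\rho_2$ to extract $\emptyset\vdash^\cap V : \rho_2'(\rho_1(\sigma))$, rewrite it as $(\rho_1;\rho_2')(\sigma)$ via the permutation lemma, and then apply CS-Ext to extend the environment obtained from the induction hypothesis on $\Gamma_1;\Gamma_2' \models \rho_1;\rho_2'$.

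The main obstacle will be carefully managing the two readings of the composition $\rho_1;\rho_2$: once as a single substitution applied to a type (where commutativity is needed to pull $\rho_1$ out past $\rho_2'$), and once as the sequential extension order in the CS-Ext rule (which is strictly left-to-right). Lemma~\ref{lemma:permutation} is exactly what bridges these two views, but each invocation requires checking that the domains remain disjoint, an invariant maintained by CS-Ext since every extension uses a fresh variable name from a well-formed environment. A small subtlety worth handling explicitly is that applying $\rho_1$ to $\Gamma_2$ presupposes that the free variables of the refinements appearing in $\Gamma_2$ lie either in $\mathrm{\mathbf{Dom}}(\rho_1)$ or are already bound inside $\Gamma_2$; this follows from well-formedness of the composite environment $\Gamma_1;\Gamma_2$, which is implicit in having a derivation of $\Gamma_1;\Gamma_2\models\rho_1;\rho_2$.
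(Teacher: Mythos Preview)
Your proposal is correct and follows essentially the same approach as the paper, which simply records ``By induction on the structure of $\Gamma$'' for part (1) and ``By induction on the structure of $\Gamma_2$'' for part (2). Because the rules \textsc{CS-Empty} and \textsc{CS-Ext} force the environment and the substitution to grow in lockstep, your choice to induct on $|\rho_2|$ is equivalent to the paper's choice of inducting on the environment; you have merely filled in the details (in particular the appeal to Lemma~\ref{lemma:permutation} in the inductive step of part (2)) that the paper leaves implicit.
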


\begin{proof}
  \begin{enumerate}
  \item By induction on the structure of $\Gamma$.
  \item By induction on the structure of $\Gamma_2$. \qed
  \end{enumerate}
\end{proof}

\begin{corollary}[Well-formed substitutions] \hfill

  $\Gamma_1;x : \sigma_x; \Gamma_2\models \rho_1;[V_x/x];\rho_2 
  \Longleftrightarrow \Gamma_1\models\rho_1$, \; $\emptyset\vdash V_x :
  \rho_1(\sigma_x)$, \; $\rho_1;[V_x/x](\Gamma_2)\models \rho_2$.
\label{cor}
\end{corollary}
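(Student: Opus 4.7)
The plan is to derive the corollary by applying Lemma~\ref{lemma:wf-subs}(2) twice, peeling off the two factorizations one after the other.

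First, I would split the context $\Gamma_1; x:\sigma_x; \Gamma_2$ coarsely, as $(\Gamma_1; x:\sigma_x)$ followed by $\Gamma_2$, with the matching split $(\rho_1;[V_x/x])$ followed by $\rho_2$ on the substitution side. The domain hypotheses of Lemma~\ref{lemma:wf-subs}(2) are satisfied because $\mathrm{\mathbf{Dom}}(\rho_1;[V_x/x]) = \mathrm{\mathbf{Dom}}(\rho_1)\cup\{x\} = \mathrm{\mathbf{Dom}}(\Gamma_1)\cup\{x\} = \mathrm{\mathbf{Dom}}(\Gamma_1; x:\sigma_x)$ and $\mathrm{\mathbf{Dom}}(\rho_2) = \mathrm{\mathbf{Dom}}(\Gamma_2)$, where the Barendregt convention ensures that $x$ lies in neither $\mathrm{\mathbf{Dom}}(\Gamma_1)$ nor $\mathrm{\mathbf{Dom}}(\Gamma_2)$. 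The lemma then gives the equivalence of $\Gamma_1;x:\sigma_x;\Gamma_2\models\rho_1;[V_x/x];\rho_2$ with the conjunction of $\Gamma_1;x:\sigma_x \models \rho_1;[V_x/x]$ and $(\rho_1;[V_x/x])(\Gamma_2)\models\rho_2$.

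Next, I would apply Lemma~\ref{lemma:wf-subs}(2) a second time to the first conjunct, now with the finer split $\Gamma_1$ followed by $(x:\sigma_x)$, and $\rho_1$ followed by $[V_x/x]$. This yields $\Gamma_1\models\rho_1$ together with $\rho_1(x:\sigma_x)\models[V_x/x]$, i.e. $x:\rho_1(\sigma_x)\models[V_x/x]$. To finish I would invert the consistent-substitution rules CS-Empty and CS-Ext on this last judgement: the only derivation has shape CS-Ext applied on top of $\emptyset\models\emptyset$, and its premise is exactly $\emptyset\vdash^\cap V_x : \rho_1(\sigma_x)$. Putting these pieces together delivers the right-hand side of the corollary; for the converse direction, the same chain of equivalences is traversed in reverse, which is legitimate because both uses of Lemma~\ref{lemma:wf-subs}(2) are biconditionals and CS-Ext is invertible on a well-formed singleton extension.

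The only genuine subtlety is bookkeeping of domains and ensuring the Barendregt freshness of $x$ with respect to $\Gamma_1$, $\Gamma_2$, $\rho_1$, and $\rho_2$, so that the two successive applications of Lemma~\ref{lemma:wf-subs}(2) are justified and the rewriting $\rho_1(x:\sigma_x) = x:\rho_1(\sigma_x)$ is unambiguous. Once this is in place the corollary is an essentially mechanical specialization of the lemma to a singleton middle segment.
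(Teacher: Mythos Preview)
Your proposal is correct and matches the paper's approach: the paper's proof consists of the single line ``Corollary of Lemma~\ref{lemma:wf-subs}'', and what you have written is precisely the natural unpacking of that claim---two nested applications of Lemma~\ref{lemma:wf-subs}(2) followed by an inversion of \textsc{CS-Ext}/\textsc{CS-Empty} on the singleton segment. Your treatment is in fact more careful than the paper's, since you make the domain side-conditions and the use of the Barendregt convention explicit.
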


\begin{proof} Corollary of Lemma~\ref{lemma:wf-subs}. \qed
\end{proof}

\begin{lemma}[Weakening]
  Let
  \[
  \begin{array}{l}
    \Gamma=\Gamma_1;\Gamma_2 \\
    \Gamma'=\Gamma_1;x:\sigma_x;\Gamma_2 \\
    x \not\in \mathrm{FV}(\Gamma_2)
  \end{array}
  \]
  then:
  \begin{enumerate}
  \item if $\Gamma'\models\rho_1;[V/x];\rho_2$, then
    $\Gamma\models\rho_1;\rho_2$;
  \item if $\Gamma\vdash^\cap E\Rightarrow E'$, then
    $\Gamma'\vdash^\cap E\Rightarrow E'$;
  \item if $\Gamma\vdash^\cap\sigma_1\prec\sigma_2$, then
    $\Gamma'\vdash^\cap\sigma_1\prec\sigma_2$;
  \item if $\Gamma\vdash^\cap\sigma$, then $\Gamma'\vdash^\cap\sigma$;
  \item if $\Gamma\vdash^\cap M : \sigma$, then $\Gamma'\vdash^\cap M :
    \sigma$.
  \end{enumerate}
\label{lemma:weakening}
\end{lemma}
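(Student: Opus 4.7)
The five parts will be proved together as a single simultaneous induction on the height of the respective derivations, so that any inductive appeal across parts is to a strictly smaller derivation; part 1 is essentially standalone and serves as a prerequisite for part 2.

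For part 1, I would invoke Corollary~\ref{cor} applied to $\Gamma' \models \rho_1;[V/x];\rho_2$, obtaining $\Gamma_1 \models \rho_1$, $\emptyset \vdash V : \rho_1(\sigma_x)$, and $\rho_1;[V/x](\Gamma_2) \models \rho_2$. Since $x \notin FV(\Gamma_2)$, we have $\rho_1;[V/x](\Gamma_2) = \rho_1(\Gamma_2)$, so the reverse direction of Lemma~\ref{lemma:wf-subs}(2) recomposes $\Gamma_1;\Gamma_2 \models \rho_1;\rho_2$, which is exactly $\Gamma \models \rho_1;\rho_2$.

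Part 2 is the crux. Rule [Imp] carries two well-typedness premises $\Gamma' \vdash^\cap E : bool$ and $\Gamma' \vdash^\cap E' : bool$, which I would obtain from their counterparts under $\Gamma$ via part 5 at strictly smaller derivation height. For the semantic premise, given any $\rho'$ with $\Gamma' \models \rho'$ and $\rho'(E) \overset{*}{\leadsto} \top$, Corollary~\ref{cor} splits $\rho' = \rho_1;[V/x];\rho_2$, and part 1 then supplies $\Gamma \models \rho_1;\rho_2$. Since $E$ and $E'$ are well-typed under $\Gamma$, which does not bind $x$, we have $x \notin FV(E) \cup FV(E')$, so $\rho'(E) = \rho_1;\rho_2(E)$ and $\rho'(E') = \rho_1;\rho_2(E')$, and the conclusion follows from the original implication under $\Gamma$.

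Parts 3, 4, and 5 are then routine inductions on the respective derivations. In every rule that extends the environment on the right---[$\prec$-Fun], [WF-Fun], [WF-Poly], [Fun], [Let], [Gen]---the IH is applied with $\Gamma_2$ replaced by $\Gamma_2; y : \sigma_y$ for the freshly bound variable $y$; the side condition $x \notin FV(\Gamma_2; y : \sigma_y)$ is preserved by the Barendregt convention, which keeps $y$ distinct from $x$ and ensures that the types attached to new bindings do not mention $x$. In the variable rules [Var] and [Var-B], the variable looked up lies in $\mathrm{dom}(\Gamma_1) \cup \mathrm{dom}(\Gamma_2)$, hence is distinct from $x$, so $\Gamma(y) = \Gamma'(y)$ and the conclusion transfers directly; [Sub] and [Inst] invoke parts 3 and 4, and [$\prec$-Base] invokes part 2, always on strictly smaller subderivations. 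The principal obstacle throughout is the bookkeeping---arranging the mutual dependency among parts 2--5 as one well-founded induction, and checking that implicit alpha-renaming keeps $x$ out of $\Gamma_2$ across every context extension; once those are pinned down, each case reduces to a direct re-application of the corresponding rule at the enlarged environment.
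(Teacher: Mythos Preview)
Your proposal is correct and follows essentially the same approach as the paper, which states only ``By simultaneous induction on the derivations of the antecedent judgments.'' You have supplied the details that the paper omits, including the observation that part~1 should be established directly from Corollary~\ref{cor} and Lemma~\ref{lemma:wf-subs} rather than folded into the induction, since in part~2 it is invoked on arbitrary $\rho'$ unrelated to the derivation being analysed.
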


\begin{proof}
  By simultaneous induction on the derivations of the antecedent
  judgments. \qed
\end{proof}

\begin{lemma}[Substitution] 
  If 
  \[
  \begin{array}{l}
    \Gamma_1\vdash^\cap V : \sigma' \\
    \Gamma=\Gamma_1;x:\sigma';\Gamma_2 \\
    \Gamma'=\Gamma_1;[V/x]\Gamma_2
  \end{array}
  \]
  then:
  \begin{enumerate}
  \item if $\Gamma\models\rho_1;[V/x]\rho_2$, then
    $\Gamma'\models\rho_1;\rho_2$;
  \item if $\Gamma\vdash^\cap E\Rightarrow E'$, then
    $\Gamma'\vdash^\cap[V/x]E\Rightarrow[V/x]E'$;
  \item if $\Gamma\vdash^\cap\sigma_1\prec\sigma_2$, then
    $\Gamma'\vdash^\cap[V/x]\sigma_1\prec[V/x]\sigma_2$;
  \item if $\Gamma\vdash^\cap\sigma$, then
    $\Gamma'\vdash^\cap[V/x]\sigma$;
  \item if $\sigma :: T$, then $[V/x]\sigma :: T$;
  \item if $\Gamma\vdash^\cap M : \sigma$, then $\Gamma'\vdash^\cap M :
    \sigma$.
  \end{enumerate}
\label{lemma:substitution}
\end{lemma}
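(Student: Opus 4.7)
The plan is a single simultaneous induction on the derivations of the six antecedent judgments, since typing, subtyping, well-formedness, implication, and well-foundedness are mutually referenced and no one of them can be discharged in isolation. Before entering the main induction I would dispose of clause (5), shape preservation, by a separate structural induction on $\sigma$: the substitution $[V/x]$ touches only refinement expressions, leaving every intersection, arrow, and $\forall$ constructor, and hence the underlying derivation $\sigma :: T$, intact.

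Clause (1) is a direct application of Corollary~\ref{cor}: the hypothesis $\Gamma \models \rho_1;[V/x];\rho_2$ unfolds to $\Gamma_1 \models \rho_1$, $\emptyset \vdash V : \rho_1(\sigma')$, and $\rho_1;[V/x](\Gamma_2) \models \rho_2$. Lemma~\ref{lemma:permutation} lets me rewrite $\rho_1;[V/x](\Gamma_2)$ as $\rho_1([V/x]\Gamma_2)$, so the same corollary — applied in the opposite direction — reassembles $\Gamma' \models \rho_1;\rho_2$. For clause (2) I would unfold the semantic definition of the [Imp] rule: given any $\rho$ with $\Gamma' \models \rho$, Lemma~\ref{lemma:wf-subs} splits it as $\rho_1;\rho_2$ aligned with $\Gamma_1$ and $[V/x]\Gamma_2$, and splicing $[V/x]$ between the halves produces a consistent substitution for $\Gamma$ itself. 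The algebraic identity $\rho([V/x]E) = (\rho_1;[V/x];\rho_2)(E)$, i.e., the ordinary substitution lemma for the ambient $\lambda$-calculus, then turns the hypothesis $\Gamma \vdash E \Rightarrow E'$ into the required $\Gamma' \vdash [V/x]E \Rightarrow [V/x]E'$.

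Clauses (3), (4), and (6) then go through by routine induction on the respective derivations, invoking (2) to discharge the premise of [$\prec$-Base], invoking (4) and (5) for well-formedness and shape side-conditions appearing in [Sub], [Fun], [Let] and [Inst], and invoking (3) and (6) recursively on the structural sub-derivations. Most cases are straightforward pushings of the substitution through the term structure; the case [App] rests on the meta-level identity $[V/x]([N/y]\tau) = [[V/x]N/y]([V/x]\tau)$, which holds because Barendregt's convention guarantees $y \neq x$ and $y \notin \mathrm{FV}(V)$, while [Fun], [Let], [Gen] and [Inst] require only that the bound variable be fresh from $x$ and from $\mathrm{FV}(V)$.

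The step I expect to be the main obstacle is [Var-B] in clause (6) at $M = x$: the original derivation produces $\{\nu : B \mid \nu = x\}$, but after substitution we need to type the value $V$ with refinement $\{\nu : B \mid \nu = V\}$. The way out is to start from the hypothesis $\Gamma_1 \vdash^\cap V : \sigma'$, weaken it to $\Gamma'$ via Lemma~\ref{lemma:weakening}, and then use subsumption together with the behaviour of the embedding $\llbracket V \rrbracket$ on values to extract the desired singleton refinement. Reconciling this step with the stated form of the conclusion — where the conclusion as written does not visibly substitute inside $M$ and $\sigma$, so the intended reading must be $\Gamma' \vdash^\cap [V/x]M : [V/x]\sigma$ — is where the bookkeeping is most exposed, and where I would concentrate the bulk of the verification effort.
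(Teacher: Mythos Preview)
Your proposal is correct and follows the same approach as the paper: the paper's proof consists of the single sentence ``By simultaneous induction on the derivations of the antecedent judgments,'' so your sketch is simply a detailed unfolding of that same strategy. Your observation that clause~(6) as stated omits the substitution on $M$ and $\sigma$ is well taken; the intended conclusion is indeed $\Gamma' \vdash^\cap [V/x]M : [V/x]\sigma$, and the [\textsc{Var-B}] case you single out is exactly where the mismatch would otherwise bite.
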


\begin{proof}
  By simultaneous induction on the derivations of the antecedent judgments. \qed
\end{proof}

\begin{theorem}[Subject reduction] If $\Gamma\vdash^\cap M : \sigma$
  and $M\leadsto N$, then $\Gamma\vdash^\cap N : \sigma$.
\label{th:subject-reduction}
\end{theorem}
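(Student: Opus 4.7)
The plan is to proceed by induction on the derivation of $M \leadsto N$, taking cases on which evaluation rule is applied at the root. For each case, I need inversion on the typing derivation $\Gamma \vdash^\cap M : \sigma$, but because the rules [{\sc Sub}] and [{\sc Intersect}] can be applied to any term at any point, direct inversion is not immediate: I will first need a \emph{generation lemma} stating that, regardless of the chain of [{\sc Sub}] and [{\sc Intersect}] steps, the derivation must ultimately use a syntax-directed rule compatible with the head constructor of $M$, producing a type that is a subtype of $\sigma$ (up to intersection decomposition via [$\prec$-Elim] and [$\prec$-Intersect]). With that in hand, the final step in each case is always to close with [{\sc Sub}] against $\sigma$.

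The easy cases are $[\mathcal{E}\text{-}Constant]$ and $[\mathcal{E}\text{-}Let]$. For constants, generation forces the derivation of $c\,V$ to go through [{\sc App}] and [{\sc Const}], and the required $\Gamma\vdash^\cap \llbracket c\rrbracket(V) : \sigma$ follows directly from clause~2 of Definition~\ref{def:constants}. For $\text{let}\ x = V\ \text{in}\ M \leadsto [V/x]M$, generation yields $\Gamma\vdash^\cap V : \sigma_V$ and $\Gamma; x:\sigma_V \vdash^\cap M : \tau$ with $\tau \prec \sigma$; then clause~6 of the Substitution Lemma~\ref{lemma:substitution} (applied with $\Gamma_2 = \emptyset$) gives $\Gamma \vdash^\cap [V/x]M : [V/x]\tau$, after which [{\sc Sub}] produces $\sigma$.

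The $\beta$ case is the heart of the argument. From $\Gamma \vdash^\cap (\lambda x.M) V : \sigma$, generation on an application yields a type $(x:\tau_x \to \tau)$ for $\lambda x.M$ and $\Gamma \vdash^\cap V : \tau_x$, with $[V/x]\tau \prec \sigma$; generation on the $\lambda$-abstraction (noting that [{\sc Intersect}] combined with [$\prec$-Intersect-Fun] may have produced an intersection of arrows sharing the argument type) gives $\Gamma; x:\tau_x \vdash^\cap M : \tau$. The Substitution Lemma then delivers $\Gamma \vdash^\cap [V/x]M : [V/x]\tau$, and [{\sc Sub}] closes the case. I will read clause~6 of Lemma~\ref{lemma:substitution} in its intended sense (the consequent should be $\Gamma' \vdash^\cap [V/x]M : [V/x]\sigma$); otherwise the statement is trivially weakening and would not support this step.

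For the congruence rule $[\mathcal{E}\text{-}Compat]$, I will proceed by induction on the context $C$, combining the outer induction hypothesis on $M \leadsto N$ with generation at the outer head of $C$ to extract a typing for the subterm in the hole, apply the IH, and then reconstruct the typing at the head (the case $C = \text{let}\ x = [\;]\ \text{in}\ M'$ additionally requires the Weakening Lemma~\ref{lemma:weakening} to keep $M'$ typed under the unchanged binding). The principal obstacle throughout is the interaction of [{\sc Sub}] and [{\sc Intersect}] with syntax-directed inversion, in particular when a $\lambda$-abstraction is typed as an intersection of arrows: I must ensure that the generation lemma gives a single uniform arrow whose argument type is supertyped by $\tau_x$ and whose result type, after substitution, is a subtype of the required component of $\sigma$, relying on [$\prec$-Intersect-Fun] and [$\prec$-Elim] to normalize the subtyping chain.
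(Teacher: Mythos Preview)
Your argument is correct, but it takes the opposite induction from the paper. The paper proceeds by induction on the typing derivation $\Gamma\vdash^\cap M:\sigma$, so the non-syntax-directed rules are handled as separate, trivial cases: if the last rule is [{\sc Sub}], the IH gives $\Gamma\vdash^\cap N:\sigma_1$ and one more [{\sc Sub}] yields $\sigma_2$; if the last rule is [{\sc Intersect}], the IH is applied to each premise and [{\sc Intersect}] recombines. Only in the syntax-directed cases ([{\sc App}], [{\sc Let}], \dots) does one then split on the shape of the redex; there the paper still needs a lightweight inversion (``pushing [{\sc Sub}] down'' to expose [{\sc Fun}] or [{\sc Const}]), but it never formulates a general generation lemma.

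Your route (induction on $M\leadsto N$) front-loads all that work into an explicit generation lemma that peels off arbitrary chains of [{\sc Sub}] and [{\sc Intersect}]. This is more modular and makes the interaction with intersections of arrows explicit, at the cost of having to state and prove that lemma carefully (in particular the $\lambda$-case you flag, where an arrow type may arise from an intersection of arrows via [$\prec$-Intersect-Fun] and [$\prec$-Elim]). The paper's choice sidesteps this by letting the typing induction absorb [{\sc Sub}]/[{\sc Intersect}] once and for all. Two minor remarks on your plan: in the let-context case of $[\mathcal{E}\text{-}Compat]$ you do not actually need Weakening, since the IH preserves the type $\sigma_M$ of the bound term and the body stays typed under the same binding; and your reading of clause~6 of Lemma~\ref{lemma:substitution} is the intended one (the printed consequent is a typo).
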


\begin{proof} By induction on the derivation $\Gamma\vdash^\cap M :
  \sigma$, splitting cases on which rule was used at the bottom. We
  give here the cases for [{\scshape Intersect}] and [{\scshape App}].
  \begin{itemize}
  \item case [{\scshape Intersect}]: By inversion
    \[
    \begin{array}{l}
      \Gamma\vdash^\cap M : \tau_1 \\
      \Gamma\vdash^\cap M : \tau_2 \\
      \tau_1\cap\tau_2 :: T
    \end{array}
    \]

    By IH
    \[
    \begin{array}{l}
      \Gamma\vdash^\cap N : \tau_1 \\
      \Gamma\vdash^\cap N : \tau_2
    \end{array}
    \]

    So, the following derivation is then valid
    \[
    \inferrule*[Left=Intersect]
    {
      \Gamma\vdash^\cap N : \tau_1 \\
      \Gamma\vdash^\cap N : \tau_2 \\
      \tau_1\cap\tau_2 :: T
    }
    {
      \Gamma\vdash^\cap N : \tau_1\cap\tau_2
    }
    \]

  \item case [{\scshape App}]: By inversion
    \[
    \begin{array}{l}
      \Gamma\vdash^\cap M : (x:\tau_x\rightarrow\tau) \\
      \Gamma\vdash^\cap N : \tau_x
    \end{array}
    \]

    \begin{itemize}
    \item sub-case in which $M$ is a context: For this case consider
      $M\leadsto M'$.

      By IH
      \[
      \Gamma\vdash^\cap M': (x:\tau_x\rightarrow\tau)
      \]

      Given that $M\leadsto M'$, then $MN\leadsto M'N$.

      The following derivation is then valid
      \[
      \inferrule*[Left=App]
      {
        \Gamma\vdash^\cap M': (x:\tau_x\rightarrow\tau) \\
        \Gamma\vdash^\cap N : \tau_x
      }
      {
        \Gamma\vdash^\cap M'N : [N/x]\tau
      }
      \]

    \item sub-case in which $N$ is a context: Similar to the previous
      one.

    \item sub-case in which application is of the form
      $\mathrm{c}\,V$: By pushing applications of rule [{\scshape
        Sub}] down, we can ensure rule [{\scshape Const}] was used at
      the bottom of the derivation of the type for $\mathrm{c}$. 

      For this case, $\mathrm{c}\, V\leadsto
      \llbracket\mathrm{c}\rrbracket(V)$.

      By inversion
      \[
      \begin{array}{l}
        \Gamma\vdash^\cap \mathrm{c} : (x : \tau_x\rightarrow\tau)
        \\
        \Gamma\vdash^\cap V : \tau_x
      \end{array}
      \]

      By Definition~\ref{def:constants}, we have
      \[
      \Gamma\vdash^\cap \llbracket\mathrm{c}\rrbracket(V) : [V/x]\tau 
      \]
      which is the desired conclusion.

    \item case in which application is of the form $(\lambda x.M) V$:
      For this case 
      \[
      (\lambda x.M) V \leadsto [V/x]M
      \]

      By pushing applications of the rule [{\scshape Sub}] down, we
      can ensure rule [{\scshape Fun}] is used at the bottom of the
      derivation of the type for $\lambda x.M$.

      By inversion
      \[
      \begin{array}{l}
        \Gamma\vdash^\cap \lambda x.M : (x:\tau_x\rightarrow \tau)
        \\
        \Gamma\vdash^\cap V : \tau_x
      \end{array}
      \]

      By inversion on rule [{\scshape Fun}]
      \[
      x : \tau_x \vdash^\cap M : \tau
      \]

      By Lemma~\ref{lemma:substitution}
      \[
      \Gamma\vdash^\cap [V/x]M : [V/x]\tau
      \]
      which is the desired conclusion. 
    \end{itemize}
  \end{itemize}
  \qed
\end{proof}

\begin{theorem}[Over approximation] If $\judgment{M}{\sigma}$, then
  $\Gamma\vdash^\cap M : \sigma$.
\label{th:overapproximation}
\end{theorem}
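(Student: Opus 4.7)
The plan is to proceed by induction on the derivation of $\judgment{M}{\sigma}$, observing that each typing rule in Figure~\ref{fig:type-system} has an exact syntactic counterpart in Figure~\ref{fig:dep-type-system}. Since every Liquid Intersection Type $\hat{\sigma}$ is in particular a Refinement Intersection Type $\sigma$ (a qualifier from $\mathbb{Q}$ is merely a syntactically restricted boolean expression in $E$), the statement to transport is purely about derivability, not about reshaping the types themselves.

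Before attacking the main induction, I would prove two auxiliary transfer lemmas, one for well-formedness and one for subtyping: if $\Gamma\vdash^\cap\sigma$ (resp.\ $\Gamma\vdash^\cap\sigma_1\prec\sigma_2$) is derivable using the rules on the left of Figure~\ref{fig:type-system}, then it is derivable using the corresponding rules of Figures~\ref{fig:dep-type-system}--\ref{fig:wf-cs-rules}. For well-formedness the two rule sets are literally identical, so the proof is a trivial induction. For subtyping all rules except [$\prec$-Base] are identical, so the only nontrivial step is the [$\prec$-Base] case.

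The [$\prec$-Base] case is where the real work sits, and it is where I expect the proof to concentrate. The decidable premise is $\mathtt{Valid}(\llbracket\Gamma\rrbracket\wedge\llbracket E_1\rrbracket\wedge\ldots\wedge\llbracket E_n\rrbracket\Rightarrow\llbracket E'_1\rrbracket\wedge\ldots\wedge\llbracket E'_m\rrbracket)$, whereas the corresponding undecidable premise asks for the semantic implication $\Gamma;\nu:B\vdash^\cap E_1\wedge\ldots\wedge E_n\Rightarrow E'_1\wedge\ldots\wedge E'_m$. The Embedding definition is designed precisely as this bridge: validity of the SMT-level formula entails the semantic implication. It is stated for a single $E$ and $E'$, so I would either view the two conjunctions as single boolean expressions (using that $\llbracket E_1\wedge\ldots\wedge E_n\rrbracket = \llbracket E_1\rrbracket\wedge\ldots\wedge\llbracket E_n\rrbracket$ by the standard extension of the embedding to logical connectives) or state a trivial generalization of the definition. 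A small bookkeeping point is that the undecidable [\textsc{Imp}] rule additionally demands $\Gamma;\nu:B\vdash^\cap E_i:bool$ and $\Gamma;\nu:B\vdash^\cap E'_j:bool$; these are supplied by the well-formedness premises of [$\prec$-Base], which must be available because the subtyping step arises inside a use of [\textsc{Sub}] whose conclusion type is required to be well-formed.

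With the two transfer lemmas in hand, the main induction is rule-by-rule routine: typing premises are translated by the induction hypothesis, subtyping and well-formedness premises by the auxiliary lemmas, and well-foundedness premises ($\tau::T$) are purely syntactic on the types and so survive unchanged. The main obstacle is thus not in the induction but in stating the embedding cleanly enough to make the [$\prec$-Base] step truly a one-line appeal; everything else is administrative bookkeeping between the two presentations.
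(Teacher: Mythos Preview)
Your proposal is correct and follows essentially the same approach as the paper: induction on the typing derivation, noting that every rule has a syntactic counterpart in the undecidable system and that every Liquid Intersection Type is in particular a Refinement Intersection Type, with the only nontrivial case being [$\prec$-Base], which is discharged by the Embedding definition. Your explicit factoring into auxiliary transfer lemmas for well-formedness and subtyping, and your remark about supplying the $bool$ typing premises required by [\textsc{Imp}], are more detailed than the paper's brief sketch but do not deviate from its strategy.
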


\begin{proof}
The proof follows by straightforward induction on the typing
derivation. At each case the key observation is that each Liquid
Intersection Type is also a Dependent Intersection Type and for each
rule in the decidable system there is a matching rule in the
undecidable side. For the case of [{\scshape $\prec$-Base}] we use
Definition 1. 
\end{proof}

Combining Theorems~\ref{th:subject-reduction}
and~\ref{th:overapproximation} guarantees that at run-time, for 
every well-typed term, taking an evaluation step preserves types.

% The detailed proofs can be found in~\cite{PAF2014}.

% Combining the two previous results guarantees that at run-time, for
% every well-typed term, taking an evaluation step preserves types.

%%
%% Local Variables:
%% TeX-master: "main.tex"
%% ispell-local-dictionary: "american"
%% End:

\section{Type inference}
\label{sec:inference}

In this section we present our algorithm\footnote{For some cases of
  the algorithm we use a \emph{temporary type}, denoted by
  $\mathcal{A}$. The only purpose of temporary types is to ease the
  notation as we explain in section \ref{sec:cons-validity}.} for
inferring Liquid Intersection Types, Figure
\ref{fig:infer-algo}. Before executing this algorithm we bind every
sub expression using the \texttt{let-in} constructor.  This
transformation is closely related with \emph{A}-Normal Forms
\cite{Flanagan:1993:ECC:155090.155113} and is performed to force types
of intermediate expressions to be pushed into the typing context. The
algorithm we propose is built upon three main phases: \textbf{(i)} we
use the ML inference engine to get appropriate types, serving as
\emph{type shapes} for Liquid Intersection Types; \textbf{(ii)} for
some particular sub-terms a set of constraints is generated, ensuring
the well-formedness of types and that subtyping relations hold, in
order to infer sound types; \textbf{(iii)} taking qualifiers from
$\mathbb{Q}$ we solve the generated constraints \textit{on-the-fly},
much like as in classical inference algorithms.

% Section \ref{sec:ml-infer} is used to explain how ML inference is used
% as a first step to our algorithm. Then, we detail on how constraints
% are generated (section \ref{sec:constraint-gen}). In section
% \ref{sec:cons-validity} we describe how constraints are solved in
% order to infer Liquid Intersection Types to provided
% programs. We conclude in section \ref{sec:proper-infer} by stating
% some properties of our inference procedure.

\begin{figure}[h!]
\small
  \[
  \begin{array}{lcl}
    \mathtt{Infer}(\Gamma, x, \mathbb{Q}) & = & \mathrm{if} \;
    \mathcal{W}(\mathtt{Shape}(\Gamma), x) = B\; \mathrm{then} \;
    \left\{v : B \: | \: v = x\right\} \\ 
    & & \mathrm{else} \; \Gamma(x) \\

    \mathtt{Infer}(\Gamma, c, \mathbb{Q}) & = & ty(c) \\

    \texttt{Infer}(\Gamma, \lambda x.M, \mathbb{Q}) & = & \textrm{let}
    \;
    (x:\hat{\tau}_1 \rightarrow \hat{\tau}_1')
    \cap
    \ldots\cap(x:\hat{\tau}_n\rightarrow\hat{\tau}_n')
    = \texttt{Fresh}(\mathcal{W}(\texttt{Shape}(\Gamma), \lambda x.M),
    \mathbb{Q}) \; \textrm{in} \\ 
    & & \textrm{let} \; \tau_i'' = \texttt{Infer}(\Gamma; x:
    \hat{\tau}_i, M, \mathbb{Q}) \; \textrm{in} \\ 
    & & \textrm{let} \; \mathcal{A} =
    \bigcap\left\{(x:\hat{\tau}_j\rightarrow\hat{\tau}_j') \: | \:
      \Gamma\vdash^{\cap}(x:\hat{\tau}_1\rightarrow\hat{\tau}_1')\cap\ldots\cap(x:\hat{\tau}_n\rightarrow\hat{\tau}_n')\right\}
    \; \textrm{in} \\ 
    & & \bigcap\left\{(x:\hat{\tau}_k\rightarrow\hat{\tau}_k') \: | \:
      x:\hat{\tau}_k\rightarrow\hat{\tau}_k' \in \mathcal{A},
      \Gamma;x:\hat{\tau}_k\vdash^{\cap}_{\mathbb{Q}}\tau_k''\prec\hat{\tau}_k'\right\}
    \\ 
    \texttt{Infer}(\Gamma, MN, \mathbb{Q}) & = & \textrm{let} \;
    (x:\tau_1\rightarrow\tau_1')\cap\ldots\cap(x:\tau_n\rightarrow\tau_n')
    = \texttt{Infer}(\Gamma, M, \mathbb{Q}) \; \textrm{in}\\ 
    & & \textrm{let} \; \tau = \texttt{Infer}(\Gamma, N, \mathbb{Q})
    \; \textrm{in}\\ 
    & & \bigcap[N/x]\left\{\tau_i' \: | \:
      \Gamma\vdash^{\cap}_{\mathbb{Q}}\tau\prec\tau_i\right\}\\ 

    \texttt{Infer}(\Gamma, \textrm{let} \; x = M \; \textrm{in} \; N,
    \mathbb{Q}) & = & \textrm{let} \; \hat{\tau} =
    \texttt{Fresh}(\mathcal{W}(\texttt{Shape}(\Gamma), \textrm{let}\;
    x = M \;\textrm{in}\;N), \mathbb{Q}) \; \textrm{in}\\ 
    & & \textrm{let} \; \tau_1 = \texttt{Infer}(\Gamma, M,
    \mathbb{Q}) \; \textrm{in} \\ 
    & & \textrm{let} \; \tau_2 = \texttt{Infer}(\Gamma; x :
    \tau_1, N, \mathbb{Q}) \; \textrm{in}\\ 
    & & \textrm{let} \; \mathcal{A} = \bigcap\left\{\hat{\tau}_i \: | \:
      \Gamma\vdash^{\cap}\hat{\tau}\right\} \; \textrm{in}\\ 
    & & \bigcap\left\{\hat{\tau}_j \: | \: \hat{\tau}_j\in\mathcal{A},
      \Gamma; 
      x:\tau_1\vdash^{\cap}_{\mathbb{Q}}\tau_2\prec\hat{\tau}_j\right\}\\ 

    \texttt{Infer}(\Gamma, [\Lambda\alpha]M, \mathbb{Q}) & = &
    \textrm{let} \; \sigma = \texttt{Infer}(\Gamma, M, \mathbb{Q}) \;
    \textrm{in}\\ 
    & & \forall\alpha.\sigma \\
    
    \texttt{Infer}(\Gamma, [T]M, \mathbb{Q}) & = & \textrm{let} \;
    \tau' = \texttt{Fresh}(T, \mathbb{Q}) \; \textrm{in}\\ 
    & & \textrm{let} \; \forall\alpha.\sigma = \texttt{Infer}(\Gamma,
    M, \mathbb{Q}) \; \textrm{in}\\ 
    & & \texttt{let} \; \mathcal{A} = \bigcap\left\{\tau_i' \: | \:
      \Gamma\vdash^{\cap}\tau'\right\} \; \textrm{in}\\ 
    & & \sigma[\mathcal{A}/\alpha]
  \end{array}
  \]
\caption{Type inference algorithm}
\label{fig:infer-algo}
\end{figure}
\normalsize
\subsection{Using Damas-Milner type inference}
\label{sec:ml-infer}

% the use of the ML type inference algorithm $\mathcal{W}$
% \cite{Damas:1982:PTF:582153.582176} to infer \emph{skeletons} for our
% Liquid Intersection Types, taking advantage of the fact that
% refinement types refine ML types and that our intersections are done
% on types of the same form. The use of the set $\mathbb{Q}$ of
% qualifiers plays a central role in the inference procedure, since it
% bounds the possible refinement figuring in Liquid intersection
% types.

One key aspect of our inference algorithm is the use of the
inference algorithm $\mathcal{W}$ \cite{Damas:1982:PTF:582153.582176}
to infer ML types. Given the fact that a Liquid Intersection Type for
a term is a refinement and intersections of the corresponding ML type,
the types inferred by $\mathcal{W}$ act as \textit{shapes} for our
Liquid Intersection Types. Indeed, the function
\texttt{Shape}$(\cdot)$ (figuring in the typing rules and in the
inference algorithm) maps a Liquid Intersection Type to its corresponding
ML type. For example, \texttt{Shape}$((x:\left\{\nu = 0
\right\}\rightarrow\left\{\nu = 0\right\})\cap (x:\left\{\nu \geq
  0\right\}\rightarrow\left\{\nu\geq 0\right\})) = int\rightarrow
int$. %So, one just needs to plug-in the refinement expressions on the
%retrieved ML types. 

In the inference algorithm, whenever $\mathcal{W}$ is called, we need
to feed it with an environment containing exclusively ML types. This
is done by lifting \texttt{Shape}$(\cdot)$ to 
environments, \texttt{Shape}$(\Gamma)$, by applying it to every
binding in $\Gamma$.

The function $\mathtt{Fresh}(\cdot,\cdot)$ takes an ML type and the
set $\mathbb{Q}$ as input and generates a new Liquid Intersection Type
that contains all the combinations of refinement expressions from
$\mathbb{Q}$. Taking for instance the ML type $T=x:int\rightarrow int$
(we assume we can annotate types with the corresponding abstraction
variable, so it is easier to use with refinements) and $\mathbb{Q} =
\left\{\nu \geq 0, \nu \leq 0\right\}$,
\texttt{Fresh$(T,\mathbb{Q})$} would generate the Liquid
Intersection Type  
\[
\begin{array}{l}
  (x:\left\{\nu \geq 0\right\}\rightarrow\left\{\nu \geq 0
  \right\}) \: \cap \\
  (x:\left\{\nu \geq 0\right\}\rightarrow\left\{\nu \leq 0
  \right\}) \: \cap \\
  (x:\left\{\nu \leq 0\right\}\rightarrow\left\{\nu \geq 0
  \right\}) \: \cap \\
  (x:\left\{\nu \leq 0\right\}\rightarrow\left\{\nu \leq 0 \right\})
\end{array}
\]

\subsection{Constraint generation}
\label{sec:constraint-gen}

The constraints generated during inference serve as a means to ensure 
that the subtyping and well-formedness requirements are respected. In 
the presentation of the algorithm we borrow the notations from the 
typing rules, with $\Gamma\vdash^\cap\sigma$ standing for a 
well-formedness restriction over $\sigma$ and $\Gamma\vdash^\cap
\sigma\prec\sigma'$ constraining type $\sigma$ to be a subtype of
$\sigma'$. 

The well-formedness constraints are generated for terms where a fresh
Liquid Intersection Type is generated ($\lambda$-abstractions,
let-bindings and type application). For a fresh generated Liquid
Intersection Type, solving this kind of constraints will result in a
type where the free variables of every refinement are in scope of the
corresponding expression. 

The second class of constraints are the subtyping ones,
capturing relations between two Liquid Intersection Types. A
constraint $\Gamma\vdash^\cap\sigma\prec\sigma'$ is valid if the type 
$\sigma'$ is a super-type of $\sigma$, meaning that there is a type
derivation using the subsumption rule to relate the two types.

The well-formedness and subtyping rules (Figure \ref{fig:type-system})
can be used to simplify constraints prior to their solving. For
instance, the constraint
$\Gamma\vdash^\cap\tau_1\cap\ldots\cap\tau_n$ can 
be simplified to the set
$\left\{\Gamma\vdash^\cap\tau_1,\ldots,\Gamma\vdash^\cap\tau_n\right\}$.
On the other hand, the constraint $\Gamma\vdash^\cap 
(x:\tau_1\rightarrow\tau_2)\prec 
(x:\tau_1'\rightarrow\tau_2')$ can be further reduced to
$\Gamma\vdash^\cap\tau_1'\prec\tau_1$ and
$\Gamma;x:\tau_1'\vdash^\cap\tau_2\prec\tau_2'$. 

\subsection{Constraint solving} 
\label{sec:cons-validity}

We now describe the process of solving the collected constraints
throughout the inference algorithm. This process will reduce to two
different validity tests: a well-formedness constraint will,
ultimately, reduce to the constraint of the form
$\Gamma\vdash^\cap\left\{\nu : B \: | \: E\right\}$ and so it will
amount to check if the type \texttt{bool} can be derived for $E$
under $\Gamma$; for the subtyping case, the simplification of
constraints will result in a series of restrictions of the form
$\Gamma\vdash^\cap\left\{\nu : B \: | \:
  E_1\right\}\cap\ldots\cap\left\{\nu : B \: | \:
  E_n\right\}\prec\left\{\nu : 
B \: | \: E_1'\right\}\cap\ldots\cap\left\{\nu : B \: | \: 
  E_m'\right\}$, leading to check if
$\left\llbracket\Gamma\right\rrbracket \wedge
\left\llbracket E_1\right\rrbracket\wedge\ldots
\wedge\left\llbracket E_n\right\rrbracket\Rightarrow
\left\llbracket E_1'\right\rrbracket\wedge\ldots 
\wedge\left\llbracket E'_m\right\rrbracket$ holds.

Whenever well-formedness constraints are generated, these are solved
before the subtyping ones. This step ensures only well-formed types
are involved in subtyping relations. Well-formedness constraints arise
when a \emph{fresh} Liquid Intersection Type is generated, since that
is when refinement expressions are plugged into a type. 
Such fresh types will be of the form $\tau_1\cap\ldots\cap\tau_n$,
so the solution for a constraint of the form
$\Gamma\vdash^\cap\tau_1\cap\ldots\cap\tau_n$ is the type
$\bigcap\left\{\tau_i\right\}$, the intersection of all $\tau_i$
(with $1\leq i\leq n$) such that $\Gamma\vdash^\cap\tau_i$. We
assign this solution to a \emph{temporary} type, denoted by
$\mathcal{A}$, which will be used during the solving of subtyping
constraints. 

The subtyping constraints will ensure that inferred types only
present refinement expressions capturing the functional behavior of
terms. These will be used with $\lambda$-abstractions, applications
and let-bindings. Except for applications, subtyping constraints are
preceded by the resolution of well-formedness restrictions, and so it
is the case that subtyping relations will be checked using the
temporary type $\mathcal{A}$. 

For the case of $\lambda$-abstractions,
after generating the fresh Liquid Intersection Type
$(x:\hat{\tau}_1\rightarrow\hat{\tau}_1')\cap\ldots\cap
(x:\hat{\tau}_n\rightarrow\hat{\tau}_n')$, a series of calls 
to $\mathtt{Infer}$ are triggered, which we present via the syntax
$\mathrm{let}\;\tau_i'' = \mathtt{Infer}(\Gamma;x:\hat{\tau}_i, M,
\mathbb{Q})$, with $1\leq i \leq n$. These calls differ only on the
type $\hat{\tau}_i$ of $x$ pushed into the environment, implying that
different types for $M$ can be inferred. After solving the
well-formedness constraints, we must remove from type $\mathcal{A}$
the refinement expressions that would cause the type to be
unsound. We use the notation
$x:\tau_k\rightarrow\tau_k'\in\mathcal{A}$  to indicate that
$\bigcap\left\{x:\tau_k\rightarrow\tau_k'\right\}$ should be a
supertype of $\mathcal{A}$, in the sense that it can be obtained from
$\mathcal{A}$ using exclusively the rule [{\sc $\prec$-Elim}] (taking
an analogy with set theory, 
$\bigcap\left\{x:\tau_k\rightarrow\tau_k'\right\}$ would be a
sub set of the intersections of $\mathcal{A}$). Then, the inferred
type will be
$\bigcap\left\{x:\hat{\tau}_k\rightarrow\hat{\tau}_k'\right\}$, 
such that $x:\hat{\tau}_k\rightarrow\hat{\tau}_k'\in\mathcal{A}$ and the
constraint
$\Gamma;x:\hat{\tau}_k\vdash^\cap \tau_k''\prec\hat{\tau}_k'$ is
valid, 
that is, the type inferred for $M$ under the environment
$\Gamma;x:\hat{\tau}_k$ is a subtype of $\hat{\tau}_k'$. As an
example, consider 
$\mathbb{Q} = \left\{\nu\geq 0, \nu\leq 0, y = 5\right\}$, the term
$\lambda x. -x$ and $\Gamma=\emptyset$. The inference procedure will
start by generating the type:
\vspace{-2.5pt}
\[
\begin{array}{l}
(x:\left\{\nu\geq 0\right\}\rightarrow\left\{\nu \geq 0\right\}) \:
\cap \\ 
(x:\left\{\nu\geq 0\right\}\rightarrow\left\{\nu \leq 0\right\}) \:
\cap \\  
(x:\left\{\nu\leq 0\right\}\rightarrow\left\{\nu \geq 0\right\}) \:
\cap \\
(x:\left\{\nu\leq 0\right\}\rightarrow\left\{\nu \leq 0\right\}) \:
\cap \\
(x:\left\{\nu \geq 0\right\}\rightarrow\left\{y = 5\right\}) \: \cap
\\
(x:\left\{\nu \leq 0\right\}\rightarrow\left\{y = 5\right\}) \: \cap
\\
(x:\left\{y = 5\right\}\rightarrow\left\{\nu \geq 0\right\}) \: \cap
\\
(x:\left\{y = 5\right\}\rightarrow\left\{\nu \leq 0\right\}) \: \cap
\\
(x:\left\{y = 5\right\}\rightarrow\left\{y = 5\right\})
\end{array}
\]
\vspace{-2.5pt}
Then, with well-formedness constraints, and since no variable $y$ is
in scope, we are left with:
\vspace{-2.5pt}
\[
\begin{array}{l}
(x:\left\{\nu\geq 0\right\}\rightarrow\left\{\nu \geq 0\right\}) \:
\cap \\ 
(x:\left\{\nu\geq 0\right\}\rightarrow\left\{\nu \leq 0\right\}) \:
\cap \\  
(x:\left\{\nu\leq 0\right\}\rightarrow\left\{\nu \geq 0\right\}) \:
\cap \\
(x:\left\{\nu\leq 0\right\}\rightarrow\left\{\nu \leq 0\right\})
\end{array}
\] 
\vspace{-2.5pt}
Finally, because of subtyping
relations, the inferred type will be:
\vspace{-2.5pt}
\[
\begin{array}{l}
(x:\left\{\nu\geq 0\right\}\rightarrow\left\{\nu \leq 0\right\}) \:
\cap \\  
(x:\left\{\nu\leq 0\right\}\rightarrow\left\{\nu \geq 0\right\})
\end{array}
\]

For application and let-bindings, solving subtyping constraints works
in a similar manner as for $\lambda$-abstractions. The type of an
application is inferred similarly as in
\cite{Freeman:1991:RTM:113445.113468}: for the function $M$ with type
$x:\tau_1\rightarrow\tau_1'\cap\ldots\cap\tau_n\rightarrow\tau_n'$ 
and the argument $N$ with type $\tau$, the type of $MN$ is
$\bigcap\left\{\tau_i'\right\}$, such that $1\leq i \leq n$ and
$\Gamma\vdash^\cap\tau\prec\tau_i$ is checked valid.

\subsection{Properties of inference}
\label{sec:proper-infer}

We were able to prove that our inference algorithm is sound with
respect to the typing rules. %This property is formalized as
                               %follows: 

\begin{lemma}[Relation with derivation and well-founded types] If \hspace{.5pt}
  $\Gamma\vdash^\cap_{\mathbb{Q}} M : \sigma$ then $\sigma::\mathtt{Shape}(\sigma)$.
\label{lemma:wf-d}
\end{lemma}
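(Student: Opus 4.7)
The plan is to proceed by induction on the derivation of $\Gamma\vdash^\cap_{\mathbb{Q}} M : \sigma$, splitting cases on the last rule applied. In each case the goal is to produce a simple type $T$ with $\sigma :: T$; the function $\mathtt{Shape}(\sigma)$ simply names this $T$.

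Most cases fall out directly from the typing rule premises. For [{\sc Intersect}] the well-founded judgment $\tau_1\cap\tau_2 :: T$ is an explicit premise, and for [{\sc Fun}] the premise $\hat{\tau}::T$ combined with the well-foundedness of $\hat{\tau}_x$ coming from the environment (assumed well-formed in the sense that every bound type in $\Gamma$ is itself well-founded) yields the conclusion via ::-Fun. [{\sc Var-B}] produces $\{\nu:B\mid \nu=x\}$ which is well-founded by ::-Ref, [{\sc Var}] has $\Gamma(x)::T$ as a premise, and [{\sc Const}] relies on clause~1 of Definition~\ref{def:constants} together with the convention that well-formedness of $ty(c)$ entails well-foundedness of its skeleton. [{\sc Let}] transports $::$ directly from the IH on the body. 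The polymorphic rules [{\sc Gen}] and [{\sc Inst}] are handled by ::-$\forall$ and by the type-level substitution analogue (item~5 of Lemma~\ref{lemma:substitution}), noting that $\mathtt{Shape}(\hat{\tau})=T$ is a premise of [{\sc Inst}].

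The case [{\sc App}] is slightly more involved: by IH, $x:\tau_x\rightarrow\tau :: T_x\rightarrow T$, so by inversion of ::-Fun we obtain $\tau :: T$, and then Lemma~\ref{lemma:substitution}(5) gives $[N/x]\tau :: T$, which is the shape of the application's type.

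The main obstacle is the [{\sc Sub}] case. Here the IH only supplies $\sigma_1 :: T$, while the conclusion is typed at $\sigma_2$, so we must know that subtyping preserves the shape. The plan is to isolate this as a short auxiliary lemma, proved by induction on $\Gamma\vdash^\cap \sigma_1\prec\sigma_2$: rules [$\prec$-Base], [$\prec$-Var], [$\prec$-Elim] and [$\prec$-Intersect-Fun] preserve the shape by direct inspection, while [$\prec$-Fun], [$\prec$-Intersect] and [$\prec$-Poly] propagate the property through the inductive hypothesis on sub-derivations. With this lemma in hand, the [{\sc Sub}] case closes immediately, completing the induction.
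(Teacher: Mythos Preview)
Your approach—induction on the typing derivation—matches the paper's, whose own proof is the single line ``By straightforward induction over $\Gamma\vdash^\cap_{\mathbb{Q}} M : \sigma$.'' Two minor remarks on your write-up: the [{\sc Sub}] case is not the obstacle you make it, since the metavariables $\tau,\sigma$ already range over well-founded pretypes by the grammar in Figure~\ref{fig:syn}, so $\sigma_2::\mathtt{Shape}(\sigma_2)$ holds directly without any auxiliary shape-preservation lemma for subtyping; and for [{\sc Inst}], item~5 of Lemma~\ref{lemma:substitution} concerns \emph{term} substitution $[V/x]$, not type-variable substitution $[\hat\tau/\alpha]$, so strictly you need the (equally routine) type-level analogue rather than that item.
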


\begin{proof}By straightforward induction over
  $\Gamma\vdash^\cap_{\mathbb{Q}} M : \sigma$.\end{proof}

\vspace{3pt}
\begin{theorem}[Soundness of inference] If\hspace{2.5pt} $\mathrm{Infer}(\Gamma, M, \mathbb{Q})=
  \sigma$, then $\judgment{M}{\sigma}$.
\end{theorem}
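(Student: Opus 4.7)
The plan is to prove the theorem by structural induction on the term $M$, following exactly the clauses of $\mathtt{Infer}$. The induction hypothesis asserts that every recursive call $\mathtt{Infer}(\Gamma', M', \mathbb{Q})$ appearing in the computation returns a type that is in fact derivable for $M'$ under $\Gamma'$ in the Liquid Intersection system.

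The base cases are immediate. For a variable $x$, either the ML shape is a base type $B$, and the algorithm returns $\{\nu:B \mid \nu = x\}$, matching rule [\textsc{Var-B}], or it returns $\Gamma(x)$, matching [\textsc{Var}]. For a constant $\mathrm{c}$, the algorithm returns $ty(\mathrm{c})$, matching [\textsc{Const}]. For $[\Lambda\alpha]M$ we apply [\textsc{Gen}] to the IH, and for $[T]M$ we use [\textsc{Inst}] together with the fact that the constructed temporary type $\mathcal{A}$ has passed the well-formedness filter $\Gamma\vdash^\cap\tau'$.

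The $\lambda$-abstraction case is the technical core. The algorithm first builds a fresh type $\bigcap_i (x{:}\hat{\tau}_i \rightarrow \hat{\tau}_i')$ via $\mathtt{Fresh}$, then computes $\tau_i'' = \mathtt{Infer}(\Gamma;x{:}\hat{\tau}_i, M, \mathbb{Q})$ for each branch, keeps only the branches inside the well-formed part $\mathcal{A}$, and of those retains only the ones with $\Gamma;x{:}\hat{\tau}_k \vdash^\cap \tau_k''\prec\hat{\tau}_k'$. For each surviving branch, the IH gives $\Gamma;x{:}\hat{\tau}_k \vdash^\cap_{\mathbb{Q}} M : \tau_k''$; combining with the subtyping check and the well-formedness of $\hat{\tau}_k \rightarrow \hat{\tau}_k'$ inherited from $\mathcal{A}$, rule [\textsc{Sub}] yields $\Gamma;x{:}\hat{\tau}_k \vdash^\cap_{\mathbb{Q}} M : \hat{\tau}_k'$, and then [\textsc{Fun}] yields $\Gamma\vdash^\cap_{\mathbb{Q}} \lambda x.M : (x{:}\hat{\tau}_k\rightarrow\hat{\tau}_k')$. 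Iterating [\textsc{Intersect}] over the surviving components produces the final type; well-foundedness of the whole intersection follows from $\mathtt{Fresh}$ producing components with a common shape and from Lemma~\ref{lemma:wf-d}.

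The application case is analogous: for $MN$, by IH the function has type $\bigcap_i (x{:}\tau_i\rightarrow\tau_i')$ and the argument has type $\tau$; for each $i$ with $\Gamma\vdash^\cap \tau\prec\tau_i$ we apply [\textsc{Sub}] to retype $N$ at $\tau_i$, then [\textsc{App}] gives $\Gamma\vdash^\cap_{\mathbb{Q}} MN : [N/x]\tau_i'$, and [\textsc{Intersect}] assembles the returned type. The let-binding case repeats the abstraction pattern with a single body inference. The main obstacle I expect is not any single case in isolation but the bookkeeping of well-formedness side-conditions needed whenever [\textsc{Sub}] is invoked: one must argue, case by case, that every component kept by the algorithm through the $\mathcal{A}$-filter indeed satisfies $\Gamma\vdash^\cap \sigma$ in the appropriate extended environment, so that the subsumption step is legal. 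Once this invariant is threaded through the induction, the remaining reasoning is a direct rule-by-rule correspondence between $\mathtt{Infer}$ and Figure~\ref{fig:type-system}.
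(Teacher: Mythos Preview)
Your proposal is correct and follows essentially the same approach as the paper: structural induction on $M$, handling each clause of $\mathtt{Infer}$ by matching it with the corresponding typing rule, using [\textsc{Sub}] together with the algorithm's well-formedness and subtyping filters, and finishing with iterated [\textsc{Intersect}] (with Lemma~\ref{lemma:wf-d} supplying the well-foundedness premises). The only minor omission is that in the application case you should also apply [\textsc{Sub}] with [$\prec$-\textsc{Elim}] to project $M$'s intersection type down to a single arrow $(x{:}\tau_i\rightarrow\tau_i')$ before invoking [\textsc{App}]; the paper makes this step explicit.
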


\begin{proof} By structural induction over $M$.

  \begin{itemize}
  \item case $M\equiv x$:
    \begin{itemize}
    \item subcase in which $M$ has a basic type
      in this case $\mathcal{W}(\mathtt{Shape}(\Gamma), x) = B$ and so
      $x$ has type $\left\{\nu : B \: | \:
        \phi_1\right\}\cap\ldots\cap\left\{\nu : B \: | \:
        \phi_n\right\}$, which we abbreviate to
      $\tau_1\cap\cdots\cap\tau_n$.
      
      The following derivation is then valid
      \[
      \inferrule*[Right=\rmfamily\scshape B-Var]
      {
        \Gamma(x) = \tau_1\cap\cdots\cap\tau_n \\
        \tau_i :: B (\forall i. 1 \leq i \leq n)
      }
      {
        \Gamma\vdash^{\cap}_{\mathbb{Q}} x : \left\{\nu : B \: | \:
          \nu = x \right\} 
      }
      \]
    \item subcase in which $x$ has not a basic type: 
      in this case $\sigma=\Gamma(x)$.
      
      So, the following derivation is valid
      \[
      \inferrule*[Right=\rmfamily\scshape Var]
      {
        \Gamma(x) = \sigma \\
        \Gamma(x) :: \mathtt{Shape}(\sigma)
      }
      {
        \Gamma\vdash^{\cap}_{\mathbb{Q}} x : \sigma
      }
      \]
    \end{itemize}
  \item Case $M\equiv c$: Easy, by application of the rule
    [{\scshape Const}].
  \item Case $M\equiv \lambda x.N$: In this case the algorithm computes
    \begin{itemize}
    \item
      $(x:\hat{\tau}_1\rightarrow\hat{\tau}'_1)\cap\ldots
      \cap(x:\hat{\tau}_n\rightarrow\hat{\tau}'_n) 
      = \texttt{Fresh}(\mathcal{W}(\texttt{Shape}(\Gamma), \lambda
      x.M), \mathbb{Q})$ 
    %\item the type $\mathcal{A}$ of $\tau_i'$ such that $\tau_i' =
     % \mathrm{Cons}(\Gamma;x:\sigma_i, N, \mathbb{Q})$ 
    \end{itemize}
    By IH
    \begin{align}
    & \Gamma;x:\hat{\tau}_i\vdash^{\cap}_{\mathbb{Q}} N : \tau_i'',
    \; \forall i : 1 \leq i \leq n
    \tag{a}
    \label{eq:a}
    \end{align}
    
    By Lemma~\ref{lemma:wf-d}
    \[
    \tau''_i::\mathtt{Shape}(\tau''_i), \forall i : 1 \leq i \leq n
    \]

    The type $\mathcal{A}$ restricts the inferred type only to the
    well formed intersections:
    $\Gamma\vdash^{\cap}(x:\hat{\tau}_1\rightarrow\hat{\tau}'_1)
    \cap\ldots\cap(x:\hat{\tau}_n\rightarrow\hat{\tau}'_n)$   
    reduces to:
    $$\left\{\Gamma\vdash^{\cap}
      (x:\hat{\tau}_1\rightarrow\hat{\tau}'_1),\ldots,
      \Gamma\vdash^{\cap}(x:\hat{\tau}_n\rightarrow\hat{\tau}'_n)
    \right\}$$
    
    Consider the sub-set of derivations in (\ref{eq:a}) such that
    $\Gamma;
    x:\hat{\tau}_j\vdash^{\cap}\tau_j''\prec\hat{\tau}'_j$
    and that respects the type $\mathcal{A}$. We can conclude that
    $\hat{\tau}_j::Shape(\tau''_j)$ as the subtyping relation can be
    only applied to types refining the same ML type. We shall use $T$
    to denote $\mathtt{Shape}(\tau''_j)$.
    
    We have then a set of derivations of the form 
    \[
    \small
    \inferrule*[Left=\footnotesize\rmfamily\scshape Fun]
    {
      \inferrule*[Left=\footnotesize\rmfamily\scshape Sub]
      {
        \Gamma;x:\hat{\tau}_j\vdash^{\cap}_{\mathbb{Q}} N : \tau_j'' \\ 
        \Gamma;x:\hat{\tau}_j\vdash^{\cap} \tau_j'' \prec
        \hat{\tau}_j' \\ 
        \Gamma;x:\hat{\tau}_j \vdash^{\cap} \hat{\tau}'_j
      }
      {
        \Gamma;x:\hat{\tau}'_j\vdash^{\cap}_{\mathbb{Q}} N : \hat{\tau}'_j 
      } \\ \Gamma\vdash^{\cap} x: \hat{\tau}_j\rightarrow\hat{\tau}'_j 
        \\ \hat{\tau}'_j :: T
    }
    {
      \Gamma\vdash^{\cap}_{\mathbb{Q}}\lambda x.N :
      (x:\hat{\tau}_j\rightarrow\hat{\tau}'_j) 
    }
    \]

    By Lemma~\ref{lemma:wf-d}
    \[
    x:\hat{\tau_j}\rightarrow\hat{\tau_j'} ::
    \mathtt{Shape}(\hat{\tau_j})\rightarrow\mathtt{Shape}(\hat{\tau_j'})
    \]
    Moreover, $\mathtt{Shape}(\hat{\tau_j'}) = T$ and we shall we use
    $T'$ to denote $\mathtt{Shape}(\hat{\tau_j})$.
    
    By repeated application of the rule [{\scshape Intersect}]
    \[
    \inferrule*[Left=\footnotesize\rmfamily\scshape Intersect]
    {
      (x:\hat{\tau}_j\rightarrow\hat{\tau}'_j)\cap\ldots
      \cap(x:\hat{\tau}_{j+k}\rightarrow\hat{\tau}'_{j+k}) :: T'\rightarrow T \\\\
      \Gamma\vdash^{\cap}_{\mathbb{Q}}\lambda x.N :
      (x:\hat{\tau}_j\rightarrow\hat{\tau}'_j) \\ \ldots \\ 
      \Gamma\vdash^{\cap}_{\mathbb{Q}}\lambda x.N :
      (x:\hat{\tau}_{j+k}\rightarrow\hat{\tau}'_{j+k})
    }
    {
      \Gamma\vdash^{\cap}_{\mathbb{Q}}\lambda x.N:
      (x:\hat{\tau}_j\rightarrow\hat{\tau}'_j)\cap\ldots
      \cap(x:\hat{\tau}_{j+k}\rightarrow\hat{\tau}'_{j+k})
    }
    \]
    
  \item case $M\equiv M'N$: By IH
    \begin{itemize}
    \item 
      $
      \Gamma\vdash^{\cap}_{\mathbb{Q}} M' :
      (x:\tau_1\rightarrow\tau_1')
      \cap\ldots\cap(x:\tau_n\rightarrow\tau_n')  
      $
    \item $\Gamma\vdash^{\cap}_{\mathbb{Q}} N : \tau$
    \end{itemize}

    Consider $\mathcal{D}$ the following derivation
    \[
    \inferrule*[Right=Sub]
      {
        \Gamma\vdash^{\cap}_{\mathbb{Q}} N : \tau \\
        \Gamma\vdash^{\cap} \tau\prec \tau_i \\
        \Gamma\vdash^{\cap}\tau_i
      }
      {
        \Gamma\vdash^{\cap}_{\mathbb{Q}} N : \tau_i
      }
    \]

    For all the $\tau_i$ such that $\tau\prec\tau_i$ we
    have a derivation of the form
    \[
    \small
    \inferrule*[Right=\scriptsize \rmfamily\scshape App]
    {
      \inferrule*[Left=\scriptsize \rmfamily\scshape Sub]
      {
        \Gamma\vdash^{\cap}_{\mathbb{Q}} M' :
        (x:\tau_1\rightarrow\tau_1')\cap
        \ldots\cap(x:\tau_n\rightarrow\tau_n') 
        \\ 
        \Gamma\vdash^{\cap} (x:\tau_1\rightarrow\tau_1')\cap
        \ldots\cap(x:\tau_n\rightarrow\tau_n') \prec
        (x:\tau_i\rightarrow\tau_i') \\
        \Gamma\vdash (x : \tau_i\rightarrow\tau_i')
      }
      {
        \Gamma\vdash^{\cap}_{\mathbb{Q}} M' : (x:\tau_i\rightarrow\tau_i') 
      } \\
      \mathcal{D}
    }
    {
      \Gamma\vdash^{\cap}_{\mathbb{Q}} M'N : \tau_i'[N/x]
    }
    \]

    Let $\mathcal{D}_1$ be the previous derivation.
    For each $\tau_i$ that satisfy $\tau\prec\tau_i$
    we have a derivation of the previous form.

    By Lemma~\ref{lemma:wf-d}
    \[
    \tau'_i[N/x] :: \mathtt{Shape}(\tau'_i[N/x])
    \]
    and we shall use $T$ to denote $\mathtt{Shape}(\tau'_i[N/x])$.
    So, by repeated application of the rule [{\scshape Intersect}] the
    following derivation is valid
    \[
    \inferrule*[Right=\rmfamily\scshape Intersect]
    {
      \mathcal{D}_i \\ \ldots \\ \mathcal{D}_{i+j} \\
      \tau_i'[N/x]\cap\ldots\cap\tau_{i+j}'[N/x] :: T
    }
    {
      \Gamma\vdash^{\cap}_{\mathbb{Q}} M'N :
      \tau_i'[N/x]\cap\ldots\cap\tau_{i+j}'[N/x] 
    }
    \]

    By the definition of substitution we have
    $\tau_i'[N/x]\cap\ldots\cap\tau_{i+j}'[N/x] =
    (\tau_i'\cap\ldots\cap\tau_{i+j}')[N/x]$, which is precisely the
    inferred type.

  \item case $M\equiv \mathrm{let}\,x= M' \,\mathrm{in} \,N$:
    $\sigma$ is of the form $\hat{\tau}_1''\cap\ldots\cap\hat{\tau}_n''$. 
    By IH
    \begin{itemize}
    \item $\Gamma\vdash^{\cap}_{\mathbb{Q}} M' : \tau_1$
    \item $\Gamma;x:\tau_1\vdash^{\cap}_{\mathbb{Q}} N : \tau_2$
    \end{itemize}

    The type $\mathcal{A}$ stands for the set of $\hat{\tau}_i$ such
    that $\Gamma\vdash^{\cap}\hat{\tau}_i$, which by the definition
    of well formed type we have
    \begin{align}
      & \inferrule*[Left=WF-Intersect]
      {
        \Gamma\vdash^{\cap}\hat{\tau}_{{i}_1} \\ 
        \ldots \\
        \Gamma\vdash^{\cap}\hat{\tau}'_{{i}_n}
      }
      {
        \Gamma\vdash^{\cap}\hat{\tau}_{{i}_1}'\cap\ldots\cap\hat{\tau}'_{{i}_n}
      }
      \tag{b}
      \label{eq:b}
    \end{align}
    
    Now we consider all $\hat{\tau}_j$ in $\mathcal{A}$ such that
    $\Gamma;x :
    \tau_1\vdash^\cap\tau_2\prec\hat{\tau_j}$. We have
    that $\Gamma\vdash^\cap\hat{\tau_j}$ as this is a type taken from
    $\mathcal{A}$.
    We then have a series of derivations of the form 
    \[
    \inferrule*[Left=\scriptsize Sub]
    {
      \Gamma;x:\tau_1\vdash^{\cap}_{\mathbb{Q}} N : \tau_2 \\
      \Gamma;x:\tau_1\vdash^{\cap}
      \tau_2\prec\hat{\tau}_j \\
      \Gamma\vdashª^\cap \hat{\tau}_j
    }
    {
      \Gamma;x:\tau_1\vdash^{\cap}_{\mathbb{Q}} N : \hat{\tau}_j
    }
    \]

    By Lemma~\ref{lemma:wf-d} 
    \[
    \hat{\tau}_j :: \mathtt{Shape}(\hat{\tau}_j)
    \]
    and we will use $T$ for $\mathtt{Shape}(\hat{\tau}_j)$.
    By repeated application of the rule [{\scshape Intersect}]
    \[
    \inferrule*[Left=Intersect]
    {
      \Gamma;x:\tau_1\vdash^{\cap}_{\mathbb{Q}} N : \hat{\tau}_{{j}_1} \\
      \ldots \\ \Gamma;x:\tau_1\vdash^{\cap}_{\mathbb{Q}} N : \hat{\tau}_{{j}_k} \\
      \hat{\tau}_{{j}_1} \cap \ldots \cap \hat{\tau}_{{j}_k} :: T
    }
    {
      \Gamma;x:\tau_1\vdash^\cap_{\mathbb{Q}} N : \hat{\tau}_{{j}_1} \cap \ldots \cap \hat{\tau}_{{j}_k}
    }
    \]

    The following derivation is then valid
    \[
    \fontsize{9.8pt}{1em}\selectfont
    \inferrule*[Left=\scriptsize Let]
    {
      \Gamma\vdash^{\cap}_{\mathbb{Q}} M' : \tau_1 \\
      \Gamma;x:\tau_1\vdash^{\cap}_{\mathbb{Q}} N :
      \hat{\tau}_{{j}_1}\cap\ldots\cap\hat{\tau}_{{j}_k}  
      \\
      \inferrule*[Right=(c)]
      {
        \ddots
      }
      {
        \Gamma\vdash^{\cap}_{\mathbb{Q}}\hat{\tau}_{{j}_1}\cap
        \ldots\cap\hat{\tau}_{{j}_k}  
      }
    }
    {
      \Gamma\vdash^{\cap}_{\mathbb{Q}} \mathrm{let}\;x = M' \;
      \mathrm{in} \; N : \hat{\tau}_{{j}_1}\cap\ldots\cap\hat{\tau}_{{j}_k} 
    }
    \]

    The derivation (c) follows by (\ref{eq:b}), since it is the exact
    same derivations but now we only consider the $\hat{\tau}_j$ such
    that $\Gamma;x :
    \tau_1\vdash^\cap_{\mathbb{Q}}\tau_2\prec\hat{\tau_j}$, i.e. we
    intersect a sub-set of the types in (\ref{eq:b}).

  \item case $M\equiv[\Lambda\alpha]M'$: By IH
    \[
    \Gamma\vdash^{\cap}_{\mathbb{Q}} M' : \sigma
    \]

    The following derivation is valid
    \[
    \inferrule*[Right=Gen]
    {
      \Gamma\vdash^{\cap}_{\mathbb{Q}} M' : \sigma \\
      \alpha\not\in\Gamma
    }
    {
      \Gamma\vdash^{\cap}_{\mathbb{Q}}M' : \forall\alpha.\sigma
    }
    \]

  \item case $M\equiv[\tau]M'$: By IH
    \[
    \Gamma\vdash^{\cap}_{\mathbb{Q}} M' : \forall\alpha.\sigma
    \]

    Since 
    $\tau' = \mathtt{Fresh}(T, \mathbb{Q})$, then
    $T=\mathtt{Shape}(\tau')$. 

    $\tau'$ is of the form $\tau'_1\cap\ldots\cap\tau'_n$.
    The type
    $\mathcal{A}$ stands for the set of all 
    $\tau'_i$ such that $\Gamma\vdash^{\cap}\tau'_i$, so it is a
    sub-type of $\tau'_1\cap\ldots\cap\tau'_n$. 
    Then, the following derivation is valid
    \[
    \small
    \inferrule*[Left=\scriptsize Inst]
    {
      \Gamma\vdash^{\cap}_{\mathbb{Q}} M' : \forall\alpha.\sigma \\
      \inferrule*[lab=\scriptsize WF-Intersect]
      {
        \Gamma\vdash^{\cap}\tau'_i \\
        \ldots \\
        \Gamma\vdash^{\cap}\tau'_{i+j}
      }
      {
        \Gamma\vdash^{\cap}\tau'_i\cap\ldots\cap\tau'_{i+j}
      }\\
      \mathtt{Shape}(\tau'_i\cap\ldots\cap\tau'_{i+j}) = T
    }
    {
      \Gamma\vdash^{\cap}_{\mathbb{Q}}[\tau]M' :
      \sigma[\tau'_i\cap\ldots\cap\tau'_{i+j}/\alpha] 
    }
    \]
  \end{itemize} 
  \qed
\end{proof}

\subsection{The \emph{lisette} tool}
\label{sec:emphlisette-tool}

In order to automate all the \emph{proof-and-typing} process required
for Liquid Intersection Types inference, we implemented a prototype
tool that we baptized \textbf{lisette} (LIquid interSEction
TypEs)\footnote{\url{http://www.dcc.fc.up.pt/~mariopereira/lisette.tar.gz}}.

The purpose of \textbf{lisette} is to parse a program written in a
ML-like language (which we shall designate \emph{tiny-ML}) plus a set
of logical qualifiers and infer an appropriate Liquid Intersection
Type for that program, requiring no further assistance from the
user. This tool works as follows:
\begin{enumerate}
\item \textbf{lisette} parses the tiny-ML file (program plus
  qualifiers) and produces its A-normal form version;
\item using Damas-Milner inference engine, an ML type is computed for
  each sub-term in the program;
\item using the $\mathsf{Fresh}(\cdot,\cdot)$ function, the Liquid
  Intersection Type containing all possible combinations of qualifiers
  is generated and assigned to each sub-term;
\item then, depending on which term is being processed, a set of
  well-formedness constraints are generated, solved by testing if for
  all refinement expressions the type \textit{bool} can be derived;
\item to respect the relations between types, a set of subtyping
  constraints is computed and translated to an equivalent logical
  formula;
\item using the logic of the Why3 platform~\cite{filliatre13esop,
    filliatre13cade} as a back-end, we use several automatic theorem
  provers to test the validity of the generated subtyping constraints;
\item finally, combining the results of solving well-formedness and
  subtyping constraints, the final Liquid Intersection Type is
  assigned to the corresponding sub-term.
\end{enumerate}

Our use of the Why3 platform API is motivated by the fact that its
internal logic can target multiple provers. This allows the user of
\textbf{lisette} to experiment with different provers, comparing how
well they perform in solving the generated constraints. If the user
does not specify a particular prover to be used, then \textbf{lisette}
tries to solve a constraint by using all the available provers,
stopping with the first one that is able to prove the validity of the
constraint. If none returns a positive answer, that constraint is
marked as false. Another advantage of using Why3 is that when
designing the tool there is no need to worry about the different input 
languages of each different prover, being enough to implement a single
translation function from the language of Liquid Intersection Types to
Why3 terms.

%\rule{\linewidth}{0.4pt}
\begin{figure}
\begin{center}
  \begin{tabular}{c}
\begin{minipage}{0.25\textwidth}
\begin{verbatim}
Qualifiers
{
   v >= 0,	
   v <= 0
}

val mul = \x . * x x
val neg = \x. - x
\end{verbatim}
\end{minipage}
\end{tabular}
\end{center}
\caption{File accepted by the \textbf{lisette} tool: a set of logical
  qualifiers and a program written in tiny-ML.}
\label{fig:tiny-ml}
\end{figure}
%\rule{\linewidth}{0.4pt}

% The following example illustrates how the \textbf{lisette} tool
% computes an appropriate Liquid Intersection Type for a supplied
% program and a set $\mathbb{Q}$. 
As mentioned, this tool accepts a file
containing a set of logical qualifiers and a program written in
tiny-ML, such as the one in Figure~\ref{fig:tiny-ml}.
For this example we have $\mathbb{Q} = \{\nu \geq 0, \nu \leq 0\}$ and
the terms composing the program are $\mathit{neg}\equiv\lambda x. -x$ and
$\mathit{mul}\equiv\lambda x. *x\;x$. Using the supplied set,
\textbf{lisette} will produce the following output:  
\begin{lstlisting}[flexiblecolumns=true]
-----------------------------------
Inference result:       
      mul : (x: {v : int | (v>=0)} -> {v : int | (v>=0)}) /\ 
            (x: {v : int | (v<=0)} -> {v : int | (v>=0)})

      neg : (x: {v : int | (v<=0)} -> {v : int | (v>=0)}) /\ 
            (x: {v : int | (v>=0)} -> {v : int | (v<=0)}) 
-----------------------------------
\end{lstlisting}
At the end, \textbf{lisette} is able to infer sound and expressive
Liquid Intersection Types for the terms $\mathit{mul}$ and
$\mathit{neg}$.

% $\mathbb{Q}$ to compute a Liquid Intersection Type both for the
% function $\mathit{neg}$ and $\mathit{mul}$. Finally, for this case,
% \textbf{lisette} will output exactly 
 
%% 
%% Local Variables:
%% TeX-master : "main.tex"
%% ispell-local-dictionary : "american"
%% End:
%%

\section{Conclusion and future work}
\label{sec:conclusion}

We presented a new type system supporting functional descriptions, via
refinement types, and offering the expressiveness of intersection
types. We believe our type system can be used to derive more precise
types than previous refinement type systems, whilst maintaining
type-checking and inference decidable. Liquid
Types~\cite{Rondon:2008:LT:1375581.1375602} tend to infer poorly
accurate and even meaningless refinement types for some terms (leading
to the absence of principal types), which we preclude due to the
precision of intersection in types. Refinement types for algebraic
data-types~\cite{Freeman:1991:RTM:113445.113468} are precise and
present desirable properties such as principality and decidable
inference, though it is our believe that logical predicates are a more
natural way to specify functional behavior of programs. General
refinement types~\cite{Knowles:2010:HTC:1667048.1667051} use a very
expressive annotations language, allowing to assign very precise types
to programs, yet with the serious drawback of undecidable
type-checking and inference. With Liquid Intersection Types we
maintain our predicates language simple, while being able to
automatically infer very accurate and meaningful refinement types.

To design a decidable system we adopted a style closely related to
Liquid Types: the refinement expressions presented in types are
exclusively collected from $\mathbb{Q}$, a global set of logical
qualifiers, and the subtyping is decidable. We also impose that the
type of an expression must the intersection of refinements to its ML
type, intersecting only types of the same form.

We also proposed an inference algorithm for Liquid Intersection
Types. This algorithm takes as input an environment $\Gamma$, a term
$M$ and the set of qualifiers $\mathbb{Q}$, producing the
correspondent Liquid Intersection Type. Our inference algorithm uses
the $\mathcal{W}$ algorithm to infer the shape of a Liquid
Intersection Type, which is the ML type for that term. To determine
which refinement expressions can be plugged into a type, the algorithm
produces a series of well-formedness and subtyping constraints,
solving them immediately after their generation. We have been able to
prove that our algorithm is sound with respect to the conceived typing
rules.

% We are currently working in a prototype tool for the type system, as
% well as for the inference algorithm. So far, this prototype can
% type-check 
% and infer types for simple programs.

Current and future work includes the study of completeness of type
inference for our system and to extend decidable intersection type
systems (of finite ranks
\cite{jim1995rank,Kfoury:1999:PDT:292540.292556}) with type refinement
predicates.

% Another line of future work we intend to explore is how to introduce
% refinement expressions on classical intersection type systems. In this
% paper we described how we borrowed the expressiveness of intersection
% types to conceive a more expressive refinement type system, so it
% would be of interest to study the other way around. We plan to start
% from a rank-2 intersection type system and extend the type language
% with logical annotations. 

%%
%%Local Variables:
%% TeX-master : "main.tex"
%%End:
%%

%\nocite{*}
\bibliographystyle{eptcs}
\bibliography{refs}

\end{document}